\def \E {\mathbb{E}}
\def \d {\textrm{d}}
\def \RR {\mathbb{R}}
\def\e{{\mathbb E}}
\def\SD{{\mathrm{SD}}}
\def\GMD{\mathrm{GMD}}
\numberwithin{equation}{section} 
\newtheorem{theorem}{Theorem}[section]
\newtheorem{definition}[theorem]{Definition}
\newtheorem{example}[theorem]{Example}
\newtheorem{remark}[theorem]{Remark}
\newtheorem{proposition}[theorem]{Proposition}
\definecolor{darkread}{rgb}{0.7, 0, 0}
\begin{document}
\doublespace
\title{\Large\textbf{{Tail Structure and the Ordering of the Standard Deviation and Gini Mean Difference}}}

\author[]{\small Nawaf Mohammed \thanks{\url{nawaf.mohammed.ac@gmail.com}}}

\affil[]{\footnotesize }
\date{}
\maketitle
\vspace{-1cm}
\begin{abstract}
We investigate the ordering between two fundamental measures of dispersion for real-valued risks: the standard deviation (SD) and the Gini mean difference (GMD). Our analysis is driven by a single structural object, namely the mean excess function of the pairwise difference $|X - X'|$. We show that its monotonicity is determined by the tail behavior of the underlying distribution, giving rise to two distinct dispersion regimes. In a heavy-tailed regime, characterized by decreasing hazard rates or increasing reverse hazard rates, the SD dominates the GMD. Conversely, when both tails of the distribution are light, the GMD dominates the SD. These dominance regimes are shown to be stable under truncation, convolution, and mixtures. Discrete analogues of the main results are also developed. Overall, the results provide an intuitive interpretation of the dispersion ordering phenomena that goes beyond the existing general comparisons, with direct relevance for risk modeling and actuarial applications.
~\\
\vspace*{2cm}
~\\
{{\em Key words and phrases}: standard deviation; Gini mean difference; hazard rate; reverse hazard rate; log-concavity; log-convexity}
\end{abstract}

\newpage
{\color{black}
\section{Introduction}
\label{sec:intro}

Measures of dispersion play a central role in probability, statistics, and actuarial science, where variability is often as informative as location in the assessment of risk. While measures of central tendency describe typical outcomes, dispersion measures quantify the spread of a distribution and its sensitivity to fluctuations and extremes. In insurance and economic applications (see, for instance, \cite{Boonen2024,Furman2017,Rockafellar2003}), dispersion is closely tied to uncertainty, heterogeneity, and tail behavior, making its careful characterization essential.

Among dispersion measures, the \emph{standard deviation} (SD) remains the most widely used. It quantifies variability through squared deviations from the mean and is analytically convenient due to its close connections with quadratic optimization, $L^2$ geometry, and Gaussian models. For a random variable $X$ with mean $\E[X]$, the SD is defined by
\[
\SD[X] = \sqrt{\E\big[(X - \E[X])^2\big]}.
\]
An equivalent and often useful representation expresses the SD in terms of pairwise differences:
\begin{equation}
\label{eq:SD}
\SD[X] = \sqrt{\frac{1}{2}\,\E\big[(X - X')^2\big]},
\end{equation}
where $X'$ denotes an independent copy of $X$. This formulation emphasizes the interpretation of the SD  as the square root of the average squared distance between two independent realizations of the same risk.

Despite its popularity, the SD exhibits well-known shortcomings in risk-sensitive settings. Because squared deviations disproportionately penalize large observations -- whether gains or losses -- it tends to amplify the influence of extreme outcomes relative to typical fluctuations. As a result, for heavy-tailed distributions or loss variables with substantial tail risk, the SD may overstate effective dispersion and provide a distorted picture of variability in the central mass of the distribution.

An alternative measure that mitigates this sensitivity to extremes is the \emph{Gini mean difference} (GMD), originally introduced by Corrado Gini \citep{Ceriani2011}. Unlike the SD, the GMD is based on absolute rather than squared deviations and is defined as
\begin{equation}
\label{eq:GMD}
\GMD[X] = \E\big[|X - X'|\big],
\end{equation}
where $X'$ is again an independent copy of $X$. The GMD measures the average absolute separation between two realizations and yields a notion of dispersion that balances contributions across the distribution. Although it is generally less tractable analytically than the SD, the GMD is appealing in applications where robustness to extreme outcomes is of primary concern \citep{Yitzhaki2002}.

These considerations naturally lead to the question of how the SD and the GMD compare. In particular, under what distributional conditions does one measure dominate the other, and can such an ordering be established for meaningful classes of risks? Because the SD emphasizes quadratic deviations from the mean whereas the GMD is driven by absolute pairwise differences, the two measures capture fundamentally different aspects of dispersion. As a result, there is no a priori reason to expect a universal ordering between them.

An immediate structural observation is that the relative ordering between the SD and the GMD is invariant under affine transformations. Indeed, for any $a,b \in \mathbb{R}$,
\[
\SD[aX + b] = |a|\,\SD[X]
\quad \text{and} \quad
\GMD[aX + b] = |a|\,\GMD[X].
\]
This invariance allows us to restrict attention to distributions with standardized locations and scales, since all other cases can be recovered through simple rescaling and translation. At the same time, it raises the broader question of which additional transformations -- beyond affine ones -- preserve the ordering between the SD and the GMD.

Some partial answers to these questions already exist in the literature. In particular, \cite{LaHaye2019} established that for any non-negative random variable $X$,
\[
\SD[X] \ge \frac{\sqrt{3}}{2}\,\GMD[X].
\]
While this inequality provides a quantitative link between the two measures and reflects the tendency of the SD to overweight extreme realizations, it does not fully characterize when one measure dominates the other, nor does it explain how this relationship depends on underlying distributional features.

The objective of this paper is to provide a more systematic comparison between the SD and the GMD. We investigate sufficient conditions under which
\[
\SD[X] \ge \GMD[X] \quad \text{and} \quad \SD[X] \le \GMD[X],
\]
and identify classes of distributions for which each ordering holds as well as transformations for which the order is preserved. Our results demonstrate that the relative magnitude of these two dispersion measures is inherently distribution dependent and closely linked to tail behavior, symmetry, and the relative weight of extreme versus typical realizations -- features of particular relevance in insurance and actuarial applications.

The remainder of the paper is organized as follows. Section~\ref{sec:preliminaries} introduces the notation and key quantities used throughout the analysis, together with three foundational propositions that underpin our results. Section~\ref{sec:SDgeGMD} establishes conditions under which $\SD[X] \ge \GMD[X]$ and illustrates these conditions with representative examples. Section~\ref{sec:SDleGMD} examines the complementary regime $\SD[X] \le \GMD[X]$ and identifies the structural properties responsible for this ordering. Section~\ref{sec:extensions} extends the results of Sections~\ref{sec:SDgeGMD} and~\ref{sec:SDleGMD} in two directions, by considering discrete random variables and truncated distributions of the form $(X \mid X > u)$ and $(X \mid X \le u)$. Finally, Section~\ref{sec:conclusions} concludes with a summary of the main findings and directions for future research.

\section{Preliminaries and Key Propositions}
\label{sec:preliminaries}
To avoid ambiguity, throughout this paper the terms {\it increasing} and {\it decreasing} are understood in their non-strict sense. Moreover, when referring to functional properties such as monotonicity or log-convexity/log-concavity, these are always meant to hold on the relevant support of the function, that is, almost surely rather than necessarily point-wise.

For the remainder of the paper we will consider random variables with finite second moment, and unless stated otherwise (as in Section \ref{sec:extensions}), assume they are continuous i.e. those that admit a density. For a non-degenerate random variable $X$, let us then denote its density, cumulative distribution function (CDF) and decumulative/survival distribution function (DDF) by $f_X(x)$, $F_X(x)$ and $S_X(x)$, respectively. We further introduce the following associated functions, which will be used repeatedly thereafter.
\begin{definition}
\label{def:importantfunctions}
~
\begin{itemize}
\item[•] Hazard rate function:
\[
h_X(x)=\dfrac{f_X(x)}{S_X(x)}.
\]
\item[•] Reverse hazard rate function:
\[
r_X(x)=\dfrac{f_X(x)}{F_X(x)}.
\]
\item[•] Residual survival function (for $t\ge0$):
\[
D_X(x,t)=\dfrac{S_X(x+t)}{S_X(x)}.
\]
\item[•] Reversed residual survival function (for $t\ge0$):
\[
C_X(x,t)=\dfrac{F_X(x-t)}{F_X(x)}.
\]
\end{itemize}
\end{definition}
These quantities play a central role in insurance and risk theory. The hazard rate and reverse hazard rate functions are fundamental tools in life insurance and reliability theory for modeling mortality and failure mechanisms, and they also arise naturally in casualty and property insurance when describing claim arrival and loss occurrence processes. The residual and reverse residual survival functions characterize, respectively, future-life and past-life behavior and are standard instruments in life contingencies and survival analysis. 

Moreover, several well-known equivalences link the monotonicity properties of these functions to structural characteristics of the underlying distribution functions $S_X(x)$ and $F_X(x)$. These equivalences are classical in survival and reliability theory; for comprehensive treatments, see \cite{Shaked2007} and \cite{Barlow1975}, and for introductory actuarial discussions of hazard and survival functions, see \cite{Dickson2019} or \cite{Klugman2012a}. For completeness, the following proposition formalizes these equivalences.
\begin{proposition}
\label{prop:equivalences}
The following statements are equivalent:
\begin{itemize}
\item[(A1)] $h_X(x)$ is increasing (decreasing);
\item[(A2)] for each $t\ge0$, $D_X(x,t)$ is decreasing (increasing) in $x$;
\item[(A3)] $S_X(x)$ is log-concave (log-convex).
\end{itemize}
Likewise, the following are equivalent:
\begin{itemize}
\item[(B1)] $r_X(x)$ is increasing (decreasing);
\item[(B2)] for each $t\ge0$, $C_X(x,t)$ is decreasing (increasing) in $x$;
\item[(B3)] $F_X(x)$ is log-convex (log-concave).
\end{itemize}
\end{proposition}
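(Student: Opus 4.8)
The plan is to reduce the entire proposition to two elementary identities together with the classical secant characterization of concavity and convexity. Since the random variables are assumed to admit densities, we have $S_X' = -f_X$ and $F_X' = f_X$ wherever the density is defined, so that
\[
h_X(x) = -\frac{\d}{\d x}\log S_X(x), \qquad r_X(x) = \frac{\d}{\d x}\log F_X(x).
\]
Thus the hazard rate is the negative logarithmic derivative of the survival function and the reverse hazard rate is the logarithmic derivative of the distribution function. Everything then hinges on the fact that a function is concave (convex) exactly when its derivative is decreasing (increasing), and that $D_X$ and $C_X$ are exponentials of increments of $\log S_X$ and $\log F_X$. The two groups are mirror images, so I would prove the $A$-group in full and obtain the $B$-group by repeating the argument on $\log F_X$ with the signs reversed.

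\textbf{The $A$-group.} I would establish the cycle (A1) $\Ra$ (A2) $\Ra$ (A3) $\Ra$ (A1). For (A1) $\Ra$ (A2), integrate the first identity to write, for fixed $t\ge 0$,
\[
\log D_X(x,t) = \log S_X(x+t) - \log S_X(x) = -\int_x^{x+t} h_X(s)\,\d s,
\]
so that if $h_X$ is increasing, sliding the window $[x,x+t]$ to the right can only increase the integral, forcing $\log D_X(x,t)$, and hence $D_X(x,t)$, to decrease in $x$. For (A2) $\Ra$ (A3), observe that $\log D_X(x,t)$ is precisely the increment of $\log S_X$ over a window of length $t$; the secant characterization says that a function is concave if and only if all such increments are decreasing in $x$, which is exactly the requirement that $D_X(\cdot,t)$ be decreasing for every $t\ge0$. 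Finally (A3) $\Ra$ (A1) is immediate from the identity: if $\log S_X$ is concave, its derivative $-h_X$ is decreasing (a.e.), so $h_X$ is increasing. The decreasing/log-convex branch follows by reversing every inequality.

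\textbf{The $B$-group.} The argument is identical with $S_X$ replaced by $F_X$. Here
\[
\log C_X(x,t) = \log F_X(x-t) - \log F_X(x) = -\int_{x-t}^{x} r_X(s)\,\d s,
\]
and since $r_X$ is the logarithmic derivative of the \emph{increasing} function $F_X$, log-convexity of $F_X$ corresponds to $r_X$ being increasing, giving (B1) $\Leftrightarrow$ (B3). The secant characterization of convexity then yields (B2) $\Leftrightarrow$ (B3): $\log F_X$ is convex if and only if its increments $\log F_X(x) - \log F_X(x-t)$ are increasing in $x$, equivalently if and only if $C_X(\cdot,t)$ is decreasing for each $t\ge0$. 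As before, the companion statement is obtained by reversing inequalities.

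\textbf{Main obstacle.} The only delicate point is regularity. Because the variables are merely assumed to possess densities, $f_X$ (and hence $h_X$, $r_X$) need not be continuous, so the derivative identities hold only almost everywhere and the hazard functions are only defined a.e.; this is consistent with the paper's convention that monotonicity is understood in the almost-everywhere sense. To keep the passage between the statements clean I would lean on the \emph{differentiation-free} secant characterization for the equivalences among (A2), (A3) and (B2), (B3), invoking derivatives only for the direct links (A3) $\Leftrightarrow$ (A1) and (B3) $\Leftrightarrow$ (B1). One must also take care at the boundary of the support, where $S_X$ or $F_X$ vanishes and the logarithms diverge; restricting attention to the interior of the support, where all four functions in Definition~\ref{def:importantfunctions} are finite and strictly positive, removes this difficulty without loss of generality.
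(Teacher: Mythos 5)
Your proposal is correct and follows essentially the same route as the paper: the cycle (A1)$\Rightarrow$(A2)$\Rightarrow$(A3)$\Rightarrow$(A1) via the exponential-integral representation of $D_X$, the secant-slope characterization of log-concavity/log-convexity, and the identity $(\log S_X)'=-h_X$, with the $B$-group handled by the mirror argument on $\log F_X$. Your added remarks on almost-everywhere regularity and on restricting to the interior of the support are sensible refinements of the paper's stated conventions rather than a different approach.
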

\noindent A proof is provided in Appendix A\ref{app:equivalences}.
\newline

With these notions in place, we now investigate the behavior of the SD and GMD measures at a higher structural level. Setting $Y=|X-X'|$ in equations \eqref{eq:SD} and \eqref{eq:GMD}, the comparison between the SD and the GMD reduces to the comparison of the two quantities
\[
\sqrt{\dfrac{1}{2}\,\e[Y^2]}\ge \e[Y]
\quad\text{and}\quad
\sqrt{\dfrac{1}{2}\,\e[Y^2]}\le \e[Y].
\]

Thus, the ordering between SD and GMD is governed by the relative magnitude of the first and second moments of $Y$.

The following proposition provides a sufficient condition under which either of the two inequalities holds. Its formulation relies on the \emph{mean excess function} associated with a random variable.

\begin{definition}
\label{def:meanexcessfunction}
For $t \ge 0$, the mean excess function of a random variable $X$ is defined by
\[
m_X(t) = \E[X - t \mid X > t],
\qquad \text{with } m_X(0) = \E[X].
\]
\end{definition}

The mean excess function captures the expected residual lifetime, or excess loss, beyond a given threshold and is used extensively in tail analysis and risk modelling.

\begin{proposition}
\label{prop:OrderNonNegative}
Let $Y$ be a non-negative random variable. If $m_Y(t)\ge(\le)\,m_Y(0)$ for all $t\ge0$, then
\[
\sqrt{\dfrac{1}{2}\,\e[Y^2]}\ge(\le)\,\e[Y].
\]
\end{proposition}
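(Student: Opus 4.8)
The plan is to express the second moment $\E[Y^2]$ directly in terms of the mean excess function and the survival function, so that the hypothesis $m_Y(t)\ge m_Y(0)$ translates into a single pointwise bound inside an integral. First I would recall the standard layer-cake representations for a non-negative random variable with finite second moment,
\[
\E[Y] = \int_0^\infty S_Y(t)\,\d t, \qquad \E[Y^2] = 2\int_0^\infty t\,S_Y(t)\,\d t,
\]
both of which follow from Tonelli's theorem applied to the identities $Y = \int_0^\infty \mathbbm{1}\{Y>t\}\,\d t$ and $Y^2 = \int_0^\infty 2t\,\mathbbm{1}\{Y>t\}\,\d t$.

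The key step is to rewrite the mean excess function as an integrated tail. Directly from Definition \ref{def:meanexcessfunction}, wherever $S_Y(t)>0$,
\[
m_Y(t)\,S_Y(t) = \E\big[(Y-t)^+\big] = \int_t^\infty S_Y(x)\,\d x .
\]
Multiplying by $2$, integrating in $t$, and exchanging the order of integration over the region $\{0\le t\le x\}$ (again justified by Tonelli, since every integrand is non-negative), I would obtain the structural identity
\[
2\int_0^\infty m_Y(t)\,S_Y(t)\,\d t = 2\int_0^\infty\!\left(\int_t^\infty S_Y(x)\,\d x\right)\!\d t = 2\int_0^\infty x\,S_Y(x)\,\d x = \E[Y^2].
\]
This is the bridge that links the mean excess function to the second moment and is where essentially all the content lives.

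With this identity in hand the conclusion is immediate. Assuming $m_Y(t)\ge m_Y(0)=\E[Y]$ for all $t\ge0$ and bounding the integrand pointwise gives
\[
\E[Y^2] = 2\int_0^\infty m_Y(t)\,S_Y(t)\,\d t \ge 2\,\E[Y]\int_0^\infty S_Y(t)\,\d t = 2\,\big(\E[Y]\big)^2,
\]
which is exactly $\sqrt{\tfrac{1}{2}\E[Y^2]}\ge\E[Y]$ after taking square roots. Reversing the inequality in the hypothesis reverses it throughout, yielding the companion statement.

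I expect the only delicate point to be the bookkeeping around the mean excess identity: one must check that $m_Y(t)\,S_Y(t)=\int_t^\infty S_Y(x)\,\d x$ holds over the full range of $t$ entering the integral (the set where $S_Y(t)=0$ contributes nothing, so the right endpoint of the support causes no trouble), and that finiteness of $\E[Y^2]$ guarantees convergence of all the integrals so that the Tonelli interchange is legitimate. Beyond that, the argument reduces to a single pointwise comparison under one integral sign.
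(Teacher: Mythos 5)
Your proposal is correct and follows essentially the same route as the paper: both establish the identity $\tfrac{1}{2}\,\e[Y^2]=\int_0^\infty m_Y(t)\,S_Y(t)\,\d t$ via Tonelli-type interchanges and then apply the pointwise bound $m_Y(t)\ge(\le)\,m_Y(0)=\e[Y]$ under the integral, using $\int_0^\infty S_Y(t)\,\d t=\e[Y]$ to conclude. The only cosmetic difference is that you pass through the layer-cake formulas and the identity $m_Y(t)S_Y(t)=\int_t^\infty S_Y(x)\,\d x$ explicitly, whereas the paper derives the same chain by successive swaps of the order of integration.
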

The proof of Proposition \ref{prop:OrderNonNegative} is relegated to Appendix A\ref{app:OrderNonNegative}
\newline

Proposition \ref{prop:OrderNonNegative} shows that, once the distribution of $Y=|X-X'|$ is known, determining the order between the SD and the GMD amounts to checking whether the mean excess function $m_Y(t)$ lies above or below its initial value $m_Y(0)$ for all $t\ge0$. In particular, when $m_Y(t)\ge m_Y(0)$ -- for example, when $m_Y(t)$ is increasing --  large excesses become more likely, leading the SD to dominate the GMD. Conversely, if $m_Y(t)\le m_Y(0)$, the reverse ordering holds.

To derive results of practical relevance, it is necessary to complement Proposition~\ref{prop:OrderNonNegative} by linking the behavior of $m_Y(t)$ to distributional properties of the underlying variable $X$. Accordingly, we conclude this section by completing this implication chain, expressing the mean excess function of $Y$ in terms of the hazard rate, reverse hazard rate, residual survival, and reversed residual survival functions of $X$. The next proposition establishes this connection.

\begin{proposition}
\label{prop:m_Y_Representation}
Let $Y=|X-X'|$ for two independent and identically distributed random variables $X$ and $X'$. Then the mean excess function $m_Y(t)$ admits the representation
\begin{equation}
\label{eq:m_YRepresentation}
m_Y(t)=\dfrac{\e^{F}\!\left[C_X(X,t)\,h_X(X)^{-1}\right]}{\e^{F}[C_X(X,t)]}
=\dfrac{\e^{S}\!\left[D_X(X,t)\,r_X(X)^{-1}\right]}{\e^{S}[D_X(X,t)]},
\end{equation}
where the expectations $\e^{F}[\cdot]$ and $\e^{S}[\cdot]$ are taken with respect to the measures
\[
\mathrm{d} Q^F(x)=\dfrac{F_X(x)}{\e[F_X(X)]}\mathrm{d} F_X(x)
\quad\text{and}\quad
\mathrm{d} Q^{S}(x)=\dfrac{S_X(x)}{\e[S_X(X)]}\mathrm{d} F_X(x),
\]
respectively.
\end{proposition}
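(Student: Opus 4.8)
The plan is to reduce both claimed expressions to one and the same pair of ordinary integrals, and to match that ratio against a direct computation of $m_Y(t)$. Writing $S_Y$ for the survival function of $Y=|X-X'|$, I would start from the standard mean-excess identity $m_Y(t)=\big(\int_t^\infty S_Y(s)\,\d s\big)/S_Y(t)$, which holds because $m_Y(t)=\E[(Y-t)^+]/\P(Y>t)$ and $\E[(Y-t)^+]=\int_t^\infty S_Y(s)\,\d s$. The finite-second-moment assumption guarantees that every integral below converges.

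First I would compute $S_Y(t)$ for $t>0$. Splitting the event $\{|X-X'|>t\}$ into the disjoint pieces $\{X-X'>t\}$ and $\{X'-X>t\}$ and using that $X,X'$ are i.i.d.\ (hence exchangeable), one obtains $S_Y(t)=2\,\P(X-X'>t)$. Conditioning on the second copy and integrating yields the two equivalent forms
\[
\tfrac12\,S_Y(t)=\int_{-\infty}^\infty F_X(x-t)\,f_X(x)\,\d x=\int_{-\infty}^\infty S_X(x+t)\,f_X(x)\,\d x,
\]
the two expressions being related by the substitution $x\mapsto x-t$.

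Next, for the numerator I would use $\int_t^\infty S_Y(s)\,\d s=\E[(Y-t)^+]=2\,\E[(X-X'-t)^+]$, again by symmetry. Since $\E[(X-a)^+]=\int_a^\infty S_X(x)\,\d x$ (integration by parts, the boundary term vanishing by finiteness of the mean), conditioning on $X'=x'$ and then swapping the order of integration over the region $\{x>x'+t\}$ converts this into
\[
\tfrac12\int_t^\infty S_Y(s)\,\d s=\int_{-\infty}^\infty S_X(x)\,F_X(x-t)\,\d x=\int_{-\infty}^\infty F_X(x)\,S_X(x+t)\,\d x.
\]
Dividing the numerator by $S_Y(t)$ then gives $m_Y(t)=\big(\int F_X(x-t)S_X(x)\,\d x\big)/\big(\int F_X(x-t)f_X(x)\,\d x\big)$.

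Finally, I would check that the two claimed expectation ratios collapse to exactly this expression. Unwinding the $Q^F$-representation, the factor $h_X(x)^{-1}=S_X(x)/f_X(x)$ cancels against the $f_X(x)$ in $\d F_X$, the factor $C_X(x,t)=F_X(x-t)/F_X(x)$ cancels the $F_X(x)$ weight in $\d Q^F$, and the normalizing constant $\E[F_X(X)]$ cancels between numerator and denominator, leaving precisely the ratio above. The $Q^S$-representation is handled identically: $D_X(x,t)=S_X(x+t)/S_X(x)$ and $r_X(x)^{-1}=F_X(x)/f_X(x)$ produce $\int S_X(x+t)F_X(x)\,\d x$ over $\int S_X(x+t)f_X(x)\,\d x$, which equals the same ratio by the substitution noted above. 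The only genuinely substantive step is the Fubini interchange in the numerator computation; everything else is bookkeeping in which the reweighting densities $\d Q^F,\d Q^S$ are chosen precisely so that the hazard and reverse-hazard factors cancel against the underlying density.
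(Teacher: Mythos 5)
Your proposal is correct and follows essentially the same route as the paper's proof: both compute $S_Y(t)=2\,\P(X-X'>t)$ in its two integral forms, evaluate $\int_t^\infty S_Y(s)\,\d s$ via a Fubini interchange to obtain $\int F_X(x-t)S_X(x)\,\d x$, and then identify the resulting ratio with the reweighted expectations by cancelling the hazard and reverse-hazard factors against the density. The only cosmetic difference is that you unwind the stated expectation ratios down to the plain integrals, whereas the paper builds the integrals up into the expectation form.
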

The proof is deferred to Appendix~A\ref{app:m_Y_Representation}.

\section{SD dominance}
\label{sec:SDgeGMD}

We have thus far established a powerful analytical tools in Propositions \ref{prop:equivalences} and \ref{prop:m_Y_Representation}, which, when combined with the sufficient condition in Proposition~\ref{prop:OrderNonNegative}, enables the derivation of informative ordering results between the SD and the GMD. We have also observed that the dominance of the SD is closely associated with distributional regimes driven by extreme values. Motivated by this intuition, the present section is devoted to identifying conditions under which the SD dominates the GMD and to elucidating the distributional characteristics that underpin this dominance.

Before stating the main theorem, we first establish an important proposition that broadens and clarifies the class of distributions for which such dominance results can be characterized. 
\begin{proposition}
\label{prop:r_Xh_Ximplications}
~
\begin{itemize}
\item[(i)] If $h_X(x)$ is decreasing then $r_X(x)$ is decreasing as well. Additionally, $X$ must be bounded below and unbounded above.
\item[(ii)] If $r_X(x)$ is increasing then $h_X(x)$ is increasing as well. Additionally, then $X$ must be bounded above and unbounded below.
\end{itemize}
\end{proposition}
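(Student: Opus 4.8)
The plan is to prove part (i) directly and then obtain part (ii) for free by applying it to the reflected variable $-X$. The pivot is the pair of elementary factorizations $f_X = h_X\,S_X = r_X\,F_X$, valid on the support by the definitions in Definition~\ref{def:importantfunctions}. Since a survival function is always decreasing and a CDF always increasing, these identities let me propagate monotonicity from one rate function to the other using only the fact that a product of two non-negative decreasing functions is decreasing (and the analogous statement for increasing functions), with no appeal to differentiability.

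For the monotonicity in (i), I would first note that if $h_X$ is decreasing then $f_X = h_X\,S_X$ is a product of two non-negative decreasing functions, hence decreasing. Writing $r_X = f_X\cdot(1/F_X)$ and observing that $1/F_X$ is non-negative and decreasing (because $F_X$ is increasing), the same product principle forces $r_X$ to be decreasing, which is the claim. This is consistent with the log-concavity of $F_X$ in Proposition~\ref{prop:equivalences}, but the factorization route is self-contained and avoids regularity assumptions.

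For the two boundedness assertions in (i) I would argue by contradiction through the cumulative hazard. If $X$ were bounded above with right endpoint $x_U<\infty$, then $\int_a^{x_U} h_X(x)\,\mathrm{d}x = \ln S_X(a) - \ln S_X(x_U^-) = +\infty$ since $S_X(x_U^-)=0$, whereas a decreasing $h_X$ on the finite interval $[a,x_U]$ has integral at most $h_X(a)(x_U-a)<\infty$, a contradiction; hence $X$ is unbounded above. For unboundedness below, I would use that a decreasing $h_X$ has a limit $L=\lim_{x\to-\infty}h_X(x)=\sup_x h_X(x)\in(0,\infty]$; since $S_X(x)\to1$ as $x\to-\infty$, the factorization gives $f_X(x)=h_X(x)\,S_X(x)\to L>0$, so $\int_{-\infty}^a f_X=\infty$, contradicting that $f_X$ is a density. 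Thus $X$ is bounded below. I expect this limit-plus-integrability step to be the main obstacle, since it requires the monotone limit to be strictly positive and some care that the support is a genuine interval.

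Finally, part (ii) follows by reflection. Setting $W=-X$ and using continuity one checks $h_W(w)=r_X(-w)$ and $r_W(w)=h_X(-w)$. If $r_X$ is increasing, then $h_W(w)=r_X(-w)$ is decreasing in $w$, so $W$ satisfies the hypothesis of part (i); applying (i) yields that $r_W$ is decreasing and that $W$ is bounded below and unbounded above. Translating back via $r_W(w)=h_X(-w)$ gives that $h_X$ is increasing, while $W$ bounded below and unbounded above become $X$ bounded above and unbounded below, which is exactly the assertion of (ii).
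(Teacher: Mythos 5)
Your proposal is correct, and its core overlaps substantially with the paper's proof: the monotonicity transfer in both arguments rests on the same factorization idea (the paper writes $r_X = h_X\cdot S_X/F_X$ directly as a product of non-negative decreasing functions, while you pass through the intermediate observation that $f_X = h_X S_X$ is decreasing and then divide by $F_X$ — the same principle, and your route even yields the extra fact that a decreasing hazard rate forces a decreasing density), and your unboundedness-above argument via the cumulative hazard $\int_a^{x_U} h_X = -\log S_X(x_U^-) = \infty$ versus the finite bound $h_X(a)(x_U-a)$ is exactly the paper's contradiction, stated a bit more sharply. Where you genuinely diverge is in the bounded-below claim and in part (ii). For boundedness below, the paper works at the level of the survival function: it uses the equivalence with log-convexity of $S_X$ and derives a contradiction from the secant-slope inequality of the convex function $\log S_X$ as the left argument tends to $-\infty$. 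You instead stay at the level of the hazard rate: a decreasing $h_X$ on a support unbounded below has a strictly positive limit $L$ at $-\infty$, and since $S_X \to 1$ there, $f_X = h_X S_X$ stays bounded away from zero on a half-line, contradicting integrability of the density. Both are sound; yours is more elementary (no appeal to Proposition~\ref{prop:equivalences}) but leans slightly harder on the density formulation and on the support being an interval near $-\infty$, a point you rightly flag and which is covered by the paper's standing almost-sure conventions. Finally, you obtain part (ii) entirely by reflection through $W=-X$ (the identity $h_W(w)=r_X(-w)$, $r_W(w)=h_X(-w)$ of Remark~\ref{rmk:h_Xr_X_Mirror}), whereas the paper proves (ii) by a parallel direct argument; the reflection shortcut is clean and buys you part (ii), including both boundedness statements, with no additional work.
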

The proof of this proposition appears in \cite{Barlow1963} and is provided in Appendix A\ref{app:r_Xh_Ximplications}  for completeness.
\newline

Proposition~\ref{prop:r_Xh_Ximplications} establishes that a distribution cannot exhibit heavy tails at both extremes. Specifically, a heavy right tail -- characterized by a decreasing hazard rate \(h_X(x)\), or equivalently a log-convex survival function \(S_X(x)\) -- necessarily implies a light left tail, as reflected by a decreasing reverse hazard rate \(r_X(x)\) or, equivalently, a log-concave distribution function \(F_X(x)\). Conversely, an increasing \(r_X(x)\) renders the left tail heavier, which in turn forces the right tail to be light. The proposition further elucidates this trade-off by linking tail behavior to constraints on the support of the random variable \(X\).

More broadly, Proposition~\ref{prop:r_Xh_Ximplications} highlights the strength of monotonicity assumptions on \(h_X(x)\) or \(r_X(x)\). These conditions impose rigid structural constraints on the distribution, governing its shape, tail behavior, and support. By systematically favoring extreme realizations, such assumptions give rise to a dominance of tail outcomes, ultimately leading to the prevalence of the SD measure, as formalized in the following theorem.

Although in this section, as well as in Sections \ref{sec:SDleGMD} and \ref{sec:extensions}, we primarily formulate our theorems in terms of hazard and reverse hazard rate functions, all statements admit equivalent formulations through the conditions summarized in Proposition~\ref{prop:equivalences}. 
\begin{theorem}
\label{thm:SDgeGMD}
If $h_X(x)$ is decreasing or $r_X(x)$ is increasing, then
\[
\SD[X]\ge \GMD[X].
\]
\end{theorem}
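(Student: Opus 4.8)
The plan is to verify the single hypothesis of Proposition~\ref{prop:OrderNonNegative} for $Y=|X-X'|$, namely that $m_Y(t)\ge m_Y(0)$ for every $t\ge0$; the desired conclusion $\SD[X]\ge\GMD[X]$ then follows at once. The starting point is the dual representation \eqref{eq:m_YRepresentation} of Proposition~\ref{prop:m_Y_Representation}. Since $C_X(x,0)=D_X(x,0)=1$, evaluating either representation at $t=0$ collapses the weights and yields $m_Y(0)=\e^{F}[h_X(X)^{-1}]=\e^{S}[r_X(X)^{-1}]$. Thus $m_Y(0)$ is exactly the \emph{unweighted} average (under the respective tilted measure) of the very quantity that appears \emph{weighted} in $m_Y(t)$, and the whole problem reduces to showing that the reweighting can only increase this average.

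The key observation I would exploit is that each expression in \eqref{eq:m_YRepresentation} is a ratio of a weighted mean to the total weight, so the difference $m_Y(t)-m_Y(0)$ is a normalized covariance. Concretely, from the first representation,
\[
m_Y(t)-m_Y(0)=\frac{\e^{F}[C_X(X,t)\,h_X(X)^{-1}]-\e^{F}[C_X(X,t)]\,\e^{F}[h_X(X)^{-1}]}{\e^{F}[C_X(X,t)]}=\frac{\Cov^{F}\!\big(C_X(X,t),\,h_X(X)^{-1}\big)}{\e^{F}[C_X(X,t)]},
\]
and an identical manipulation of the second representation gives the analogue with $D_X(X,t)$, $r_X(X)^{-1}$, and $\e^{S}$. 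Because the denominators are strictly positive, it suffices to establish that the covariance in the numerator is non-negative in each of the two regimes.

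First I would treat the case $r_X$ increasing using the first representation. By the equivalence (B1)$\Leftrightarrow$(B2) of Proposition~\ref{prop:equivalences}, the weight $x\mapsto C_X(x,t)$ is decreasing in $x$ for each fixed $t\ge0$; and by Proposition~\ref{prop:r_Xh_Ximplications}(ii), $r_X$ increasing forces $h_X$ increasing, so $x\mapsto h_X(x)^{-1}$ is decreasing as well. Two functions of $X$ that are monotone in the same direction are positively correlated (Chebyshev's association inequality), whence $\Cov^{F}(C_X(X,t),h_X(X)^{-1})\ge0$. Symmetrically, in the case $h_X$ decreasing I would switch to the second representation: by (A1)$\Leftrightarrow$(A2) the weight $x\mapsto D_X(x,t)$ is increasing, while Proposition~\ref{prop:r_Xh_Ximplications}(i) gives $r_X$ decreasing, so $x\mapsto r_X(x)^{-1}$ is increasing; both being increasing, $\Cov^{S}(D_X(X,t),r_X(X)^{-1})\ge0$. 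In either regime $m_Y(t)\ge m_Y(0)$ for all $t\ge0$, which is precisely what Proposition~\ref{prop:OrderNonNegative} requires.

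The crux of the argument — and the step I expect to be the most delicate — is the \emph{directional matching} that makes the covariance non-negative: in each regime the weight and the integrand must be shown to move in the same direction in $x$. This is exactly where Proposition~\ref{prop:r_Xh_Ximplications} becomes indispensable, since it is what propagates the monotonicity of one of $h_X,r_X$ to the other and thereby guarantees comonotonicity of the two factors. A secondary point to check is that the association inequality is applied with respect to the tilted probability measures $Q^{F}$ and $Q^{S}$ rather than the law of $X$; this causes no difficulty, since these are genuine probability measures and the relevant monotonicity is in the underlying variable $x$, so the correlation inequality transfers verbatim.
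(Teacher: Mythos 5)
Your proposal is correct and follows essentially the same route as the paper: reduce the claim to $m_Y(t)\ge m_Y(0)$ via Proposition~\ref{prop:OrderNonNegative}, use Proposition~\ref{prop:r_Xh_Ximplications} to propagate monotonicity from one hazard function to the other, and apply Chebyshev's association inequality (your covariance formulation is the same inequality) to the representation in Proposition~\ref{prop:m_Y_Representation}. The only cosmetic difference is that you treat the two hypotheses symmetrically using the two dual representations, whereas the paper works out the $h_X$-decreasing case with the $C_X$-representation and leaves the other case as analogous.
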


\begin{proof}
We prove the assertion under the assumption that $h_X(x)$ is decreasing; the case in which $r_X(x)$ is increasing follows by a similar procedure.

\medskip
Since $h_X(x)$ is decreasing, Proposition~\ref{prop:r_Xh_Ximplications} implies that $r_X(x)$ is also decreasing.  
By Proposition~\ref{prop:equivalences}, it then follows that, for each $t\ge 0$, the function $C_X(x,t)$ is increasing in $x$.

\medskip
Recall from Proposition~\ref{prop:m_Y_Representation} that the mean excess function $m_Y(t)$ admits the representation
\[
m_Y(t)
=\frac{\e^{F}\!\left[C_X(X,t)\,h_X(X)^{-1}\right]}
{\e^{F}[C_X(X,t)]}.
\]
Because $h_X(x)^{-1}$ is increasing and $C_X(x,t)$ is increasing in $x$, Chebyshev’s sum inequality 
yields
\[
m_Y(t)
=\frac{\e^{F}\!\left[C_X(X,t)\,h_X(X)^{-1}\right]}
{\e^{F}[C_X(X,t)]}
\ge
\frac{\e^{F}\!\left[C_X(X,t)\right]\e^{F}\!\left[h_X(X)^{-1}\right]}
{\e^{F}[C_X(X,t)]}
=
\e^{F}\!\left[h_X(X)^{-1}\right]=m_Y(0).
\]
Consequently,
\[
m_Y(t)\ge m_Y(0), \qquad\mathrm{for\ all}\ t\ge 0.
\]

\medskip
Finally, Proposition~\ref{prop:OrderNonNegative} implies that the condition $m_Y(t)\ge m_Y(0)$ for all $t\ge 0$ entails
\[
\SD[X]\ge \GMD[X],
\]
which completes the proof.
\end{proof}
The implication of Theorem~\ref{thm:SDgeGMD} is intuitive. When a distribution exhibits a heavy right tail, as indicated by a decreasing hazard rate $h_X(x)$, or a heavy left tail, as indicated by an increasing reverse hazard rate $r_X(x)$, one naturally expects the SD measure to dominate the GMD. This outcome reflects SD’s intrinsic sensitivity to extreme observations. Consequently, it is reasonable to anticipate that many commonly used distributions possessing a decreasing hazard rate $h_X(x)$ or an increasing reverse hazard rate $r_X(x)$ will display this dominance behavior.

Importantly, it suffices to construct examples of only one type. Indeed, distributions with decreasing $h_X(x)$ and those with increasing $r_X(x)$ are related through a simple reflection argument: each class is the mirror image of the other. The following proposition formalizes this relationship.
\begin{remark}
\label{rmk:h_Xr_X_Mirror}
Suppose that $X_1$ and $X_2$ are random variables satisfying
\[
X_1 + X_2 \stackrel{d}{=} c
\]
for some $c \in \RR$. Then $h_{X_1}(x)$ is increasing (decreasing) if and only if $r_{X_2}(x)$ is decreasing (increasing). This equivalence follows from a simple reflection argument applied to the density, CDF, and DDF, namely,
\[
r_{X_2}(x)
= \frac{f_{X_2}(x)}{F_{X_2}(x)}
= \frac{f_{X_1}(c - x)}{S_{X_1}(c - x)}
= h_{X_1}(c - x).
\]
\end{remark}

Since many classical distributions are right-sided, it is therefore sufficient to restrict attention to this class. Any left-sided distribution can be obtained via reflection of a right-sided counterpart. Accordingly, we present below several illustrative examples drawn exclusively from the class of right-sided distributions.

\begin{example}
\label{ex:SDgeGMD}
In this example, we collect several well-known distributions whose hazard rate functions $h_X(x)$ are decreasing, and illustrate the implications of Theorem~\ref{thm:SDgeGMD} by comparing the SD and the GMD measures.

\begin{itemize}

\item[(1)] Let $X\sim \mathrm{Gamma}(\alpha,1)$ with shape parameter $0<\alpha\le1$.  
For this range of $\alpha$, the hazard rate function $h_X(x)$ is known to be decreasing; see, for instance, Example~3.11 of \cite{Klugman2012a}. The SD of \(X\) is given by \(\SD[X]=\sqrt{\alpha}\). An explicit closed-form expression for \(\GMD[X]\) exists, but it is algebraically cumbersome and therefore omitted.
\newline
To facilitate comparison, we compute and plot the difference of the two measures as function of \(\alpha\).

\begin{figure}[H]
\begin{center}
\includegraphics[scale=0.75]{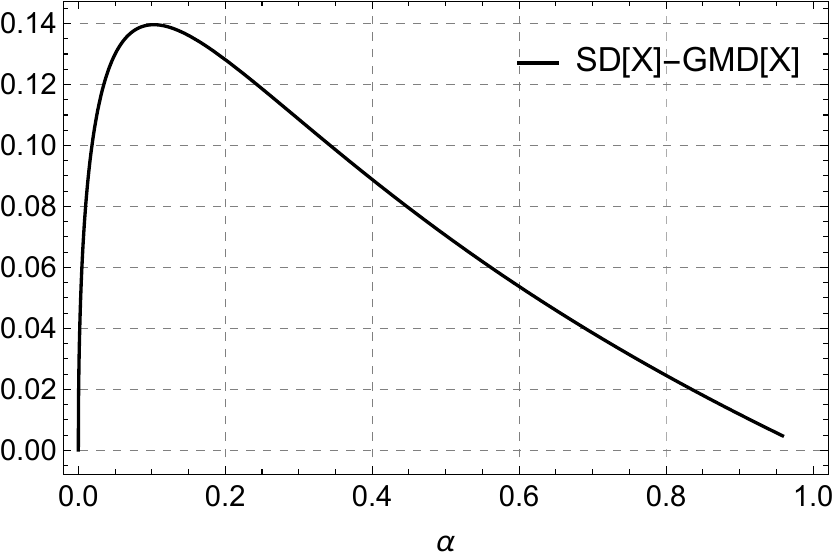}
\end{center}
\caption{Plot of $\SD[X]-\GMD[X]$ as a function of $\alpha$}
\label{fig:GammaSD}
\end{figure}

The figure clearly shows that $\SD[X]\ge \GMD[X]$ for all admissible values of $\alpha$, thereby confirming the implication of Theorem~\ref{thm:SDgeGMD}.

\item[(2)] Suppose that $X\sim\mathrm{Weibull}(\alpha,1)$. Then the hazard rate function is given by
\[
h_X(x)=\dfrac{\alpha x^{\alpha-1}\exp\left(-x^{\alpha}\right)}{\exp\left(-x^{\alpha}\right)}
=\alpha x^{\alpha-1}.
\]
It follows that $h_X(x)$ is decreasing if and only if $0<\alpha\le1$.  
For this range of the shape parameter, the SD and the GMD measures are given by
\[
\SD[X]=\sqrt{\Gamma\left(1+\dfrac{2}{\alpha}\right)-\Gamma\left(1+\dfrac{1}{\alpha}\right)^2},
\]
and
\[
\GMD[X]=2 \left(1-2^{-\frac{1}{\alpha}}\right)\Gamma\left(1+\frac{1}{\alpha}\right).
\]
Plotting the difference of both quantities as function of $\alpha$ again reveals the dominance of the SD measure over the GMD.

\begin{figure}[H]
\begin{center}
\includegraphics[scale=0.75]{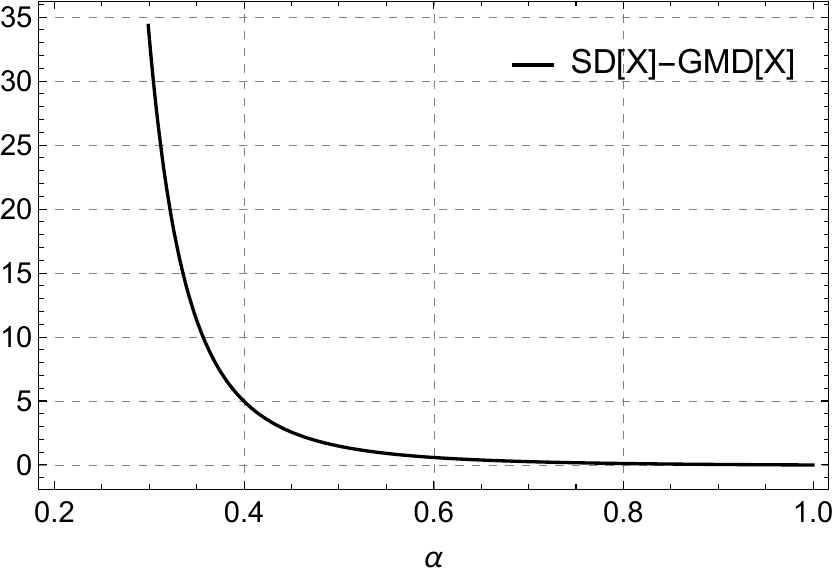}
\end{center}
\caption{Plot of $\SD[X]-\GMD[X]$ as function of $\alpha$}
\label{fig:WeibullSD}
\end{figure}

\item[(3)] Let $X\sim\mathrm{GPD}(\alpha,1)$ follow a Generalized Pareto distribution with shape parameter $0\le\alpha<1/2$ and unit scale.  
The corresponding hazard rate function is
\[
h_X(x)=\dfrac{(1+\alpha x)^{-\frac{1}{\alpha}-1}}{(1+\alpha x)^{-\frac{1}{\alpha}}}
=(1+\alpha x)^{-1},
\]
which is decreasing for all $0\le\alpha\le 1/2$.  
The SD and GMD measures in this case are given by
\[
\SD[X]=\dfrac{1}{(1-\alpha)\sqrt{1-2\alpha}}
\quad \text{and} \quad
\GMD[X]=\dfrac{2}{(1-\alpha)(2-\alpha)}.
\]
Since $(2-\alpha)\ge 2\sqrt{1-2\alpha}$ for all admissible values of $\alpha$, it follows immediately that $\SD[X]\ge\GMD[X]$.
\item[(4)] 
Suppose $X$ is a random variable with DDF
\[
S_X(x)=\exp\left(-\frac{\sqrt{\pi }}{2}\, \mathrm{erf}(x)-x\right),\quad x\ge0,
\]
where $\mathrm{erf}(x)$ is the error function. 
The hazard rate can be retrieved as
\[
h_X(x)=\exp\left(-x^2\right)+1,
\]
which is decreasing on $x\ge0$. 
Thus, SD dominates GMD, as the following calculation shows:
\[
\SD[X]=0.76>\GMD=0.68.
\]

Furthermore, since $h_X(x)$ is decreasing, then by Proposition~\ref{prop:equivalences} this is equivalent to $S_X(x)$ being log-convex. 
However, unlike the previous examples, if we check $\log f_X(x)$ we notice that it is not convex, since its second derivative is negative for $x\in[0,0.43)$. 
This shows that log-convexity of the DDF (or CDF) does not necessarily translate to log-convexity of the density.
\end{itemize}
\end{example}
As noted in the introduction, affine transformations preserve the ordering between the SD and the GMD and therefore do not affect SD dominance over the GMD.

We conclude this section by showing that SD dominance is also stable under mixtures. Consequently, additional examples of distributions with decreasing hazard rates $h_X(x)$ or increasing reverse hazard rates $r_X(x)$ can be constructed by mixing distributions that already possess these monotonicity properties with an independent mixing variable. This preservation under mixing stems from the closure of the class of log-convex functions under convex combinations. The following proposition formalizes this result.

\begin{proposition}
\label{prop:h_xr_X_mixtures}
Let $X_{\theta}$, $\theta\in\Theta$, be a set of random variables indexed by an independent random variable $\Theta$, and let $X$ denote their mixture. If all $h_{X_{\theta}}(x)$ are decreasing (all $r_{X_{\theta}}(x)$ are increasing), then the hazard rate $h_X(x)$ (the reverse hazard rate $r_X(x)$) of the mixture $X$ is also decreasing (increasing).
\end{proposition}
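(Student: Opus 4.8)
The plan is to reduce the statement to the single analytic fact that log-convexity is preserved under mixtures, using the equivalences of Proposition~\ref{prop:equivalences} to pass back and forth between hazard-rate monotonicity and log-convexity. First I would restate the hypothesis in its structural form: by the equivalence (A1)$\Leftrightarrow$(A3), each $h_{X_\theta}(x)$ being decreasing is the same as each survival function $S_{X_\theta}(x)$ being log-convex. Because $X$ is the $\Theta$-mixture of the family $\{X_\theta\}$, its survival function is the corresponding mixture of the component survival functions,
\[
S_X(x)=\E\!\left[S_{X_\Theta}(x)\right]=\int_\Theta S_{X_\theta}(x)\,\mathrm{d}Q(\theta),
\]
where $Q$ denotes the law of $\Theta$ (and, since each $X_\theta$ admits a density, so does $X$ via $f_X(x)=\int_\Theta f_{X_\theta}(x)\,\mathrm{d}Q(\theta)$, so that $h_X$ is well defined). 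The goal then becomes to show that $S_X(x)$ is itself log-convex, whereupon (A3)$\Rightarrow$(A1) yields that $h_X(x)$ is decreasing. The reverse-hazard case is handled identically, with $F_X(x)=\int_\Theta F_{X_\theta}(x)\,\mathrm{d}Q(\theta)$ in place of $S_X$ and the equivalence (B1)$\Leftrightarrow$(B3) of Proposition~\ref{prop:equivalences}.

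The core step is the claim that an arbitrary mixture of log-convex functions is log-convex, which I would prove directly from the definition. Fix $\lambda\in[0,1]$ and two points $x,y$; the cases $\lambda\in\{0,1\}$ being trivial, assume $0<\lambda<1$. Log-convexity of each $S_{X_\theta}$ gives the pointwise bound
\[
S_{X_\theta}\!\left(\lambda x+(1-\lambda)y\right)\le S_{X_\theta}(x)^{\lambda}\,S_{X_\theta}(y)^{1-\lambda}.
\]
Integrating against $Q$ and then applying H\"older's inequality with conjugate exponents $1/\lambda$ and $1/(1-\lambda)$ to the product on the right, I obtain
\[
S_X\!\left(\lambda x+(1-\lambda)y\right)\le\int_\Theta S_{X_\theta}(x)^{\lambda}\,S_{X_\theta}(y)^{1-\lambda}\,\mathrm{d}Q(\theta)\le S_X(x)^{\lambda}\,S_X(y)^{1-\lambda},
\]
which is precisely the log-convexity of $S_X$. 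This is the standard fact that the log-convex functions form a convex cone stable under addition and integration, with H\"older's inequality as the natural engine; the reverse-hazard statement uses the same two-line argument applied to the $F_{X_\theta}$.

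The main obstacle is not conceptual but technical, and is essentially about the regularity needed to make the integrals and the H\"older step legitimate. I would need measurability of $\theta\mapsto S_{X_\theta}(x)$ so the mixture is well defined, and I would invoke the ``almost surely / on the relevant support'' convention fixed in Section~\ref{sec:preliminaries} to discard the null sets on which the pointwise log-convexity inequalities might fail. The one genuinely delicate point is strict positivity of the integrands, since the exponentials $S_{X_\theta}(x)^\lambda$ and the logarithm are only well-behaved where the survival functions are positive; here Proposition~\ref{prop:r_Xh_Ximplications} does the work for me, as a decreasing $h_{X_\theta}$ forces each $X_\theta$ to be unbounded above, so that $S_{X_\theta}(x)>0$ for every $x$ (and symmetrically $F_{X_\theta}(x)>0$ everywhere when $r_{X_\theta}$ is increasing). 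With positivity secured, the H\"older bound applies without boundary complications, and the proof reduces to the clean two inequalities displayed above.
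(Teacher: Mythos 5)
Your proposal is correct and follows essentially the same route as the paper: translate decreasing hazard rate into log-convexity of $S_{X_\theta}$ via Proposition~\ref{prop:equivalences}, integrate the pointwise log-convexity inequality over $\Theta$, and close with H\"older's inequality with exponents $1/\lambda$ and $1/(1-\lambda)$ to get log-convexity of the mixture. Your additional remarks on measurability and on positivity of the survival functions (via Proposition~\ref{prop:r_Xh_Ximplications}) are sensible housekeeping that the paper leaves implicit.
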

A complete proof appears in Appendix A\ref{app:h_xr_X_mixtures}.
\newline

The robustness of SD dominance under mixing is exhibited whenever each component distribution $X_\theta$, $\theta\in\Theta$, satisfies
\[
\SD[X_{\theta}] \ge \GMD[X_{\theta}],
\]
due to a decreasing hazard rate or an increasing reverse hazard rate, then this ordering is inherited by their mixture. Consequently, the mixed random variable $X$ also obeys
\[
\SD[X] \ge \GMD[X].
\]
\section{GMD dominance}
\label{sec:SDleGMD}
In the preceding section, we have examined settings in which the SD naturally dominates the GMD. This phenomenon was largely driven by the intrinsic sensitivity of squared deviations to extreme observations, which tend to be magnified in the presence of heavy tails. In particular, heavy right tails -- characterized by decreasing hazard rates $h_X(x)$ -- or heavy left tails -- characterized by increasing reverse hazard rates $r_X(x)$ -- lead to an inflation of the SD relative to the GMD.

It is therefore natural to conjecture that the reverse ordering may arise when both tails of the distribution are light. Importantly, however, lightness at only one end of the distribution is not sufficient to guarantee GMD dominance. The behavior of both tails plays a crucial role, as a single non-light tail may still generate enough extreme variability to preserve SD dominance. Consequently, monotonicity of either $h_X(x)$ or $r_X(x)$ alone does not ensure that $\SD[X]\le \GMD[X]$. The following example illustrates this limitation by exhibiting a case in which one tail is light while the other is not, yet the SD continues to dominate the GMD.

\begin{example}
\label{ex:counterexampleGMDdominance}
Suppose that $X$ follows a distribution with DDF
\[
S_X(x)=1-\frac{\mathrm{erfi}(1+x)}{\mathrm{erfi}(2)},\quad x\in[-1,1],
\]
where $\mathrm{erfi}(\cdot)$ denotes the imaginary error function, defined by
\[
\mathrm{erfi}(x)=\dfrac{2}{\sqrt{\pi}}\int_0^x\exp(t^2)\,\mathrm{d} t.
\]
The corresponding hazard rate and reverse hazard rate functions are given by
\[
h_X(x)=\dfrac{2}{\sqrt{\pi }}\,\dfrac{ \exp((1+x)^2)}{ \mathrm{erfi}(2)-\mathrm{erfi}(1+x)},
\quad\text{and}\quad
r_X(x)=\dfrac{2}{\sqrt{\pi}}\,\dfrac{ \exp((1+x)^2)}{\mathrm{erfi}(1+x)}.
\]

Differentiating these expressions yields
\[
h^{'}_X(x)=\dfrac{h_X(x)}{S_X(x)}\left(f_X(x)+2(1+x)S_X(x)\right),
\quad\text{and}\quad
r_X^{'}(x)=\dfrac{r_X(x)}{F_X(x)}\left(2(1+x)F_X(x)-f_X(x)\right).
\]
From these derivatives, we immediately see that $h^{'}_X(x)\ge0$, implying that $h_X(x)$ is increasing on $[-1,1]$. In contrast, $r_X^{'}(x)$ changes sign at approximately $x=-0.076$, so that $r_X(x)$ is not monotone: it decreases initially on $[-1,-0.076]$ and subsequently increases on $[-0.076,1]$. This behavior is illustrated in Figures~\ref{fig:hXcounter} and~\ref{fig:rXcounter}.

\begin{figure}[H]
\centering
\includegraphics[scale=0.75]{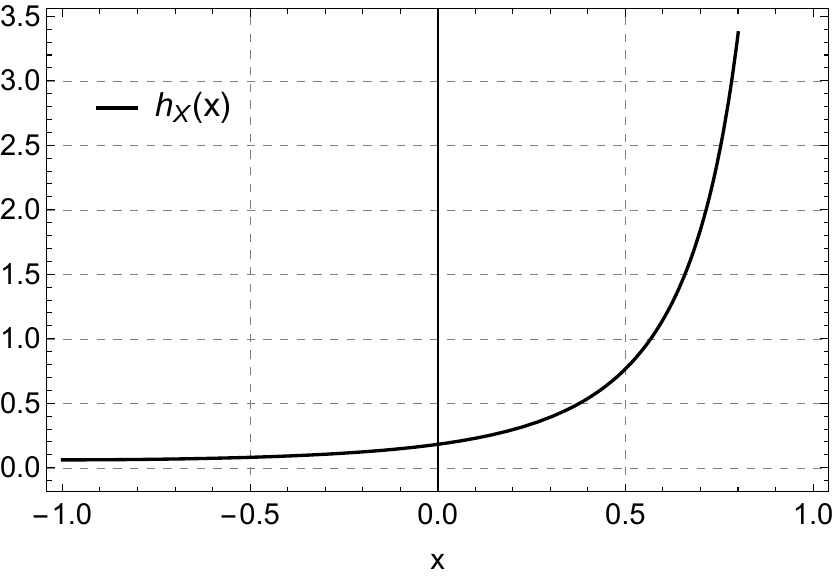}
\caption{Plot of $h_X(x)$}
\label{fig:hXcounter}
\end{figure}

\begin{figure}[H]
\centering
\includegraphics[scale=0.75]{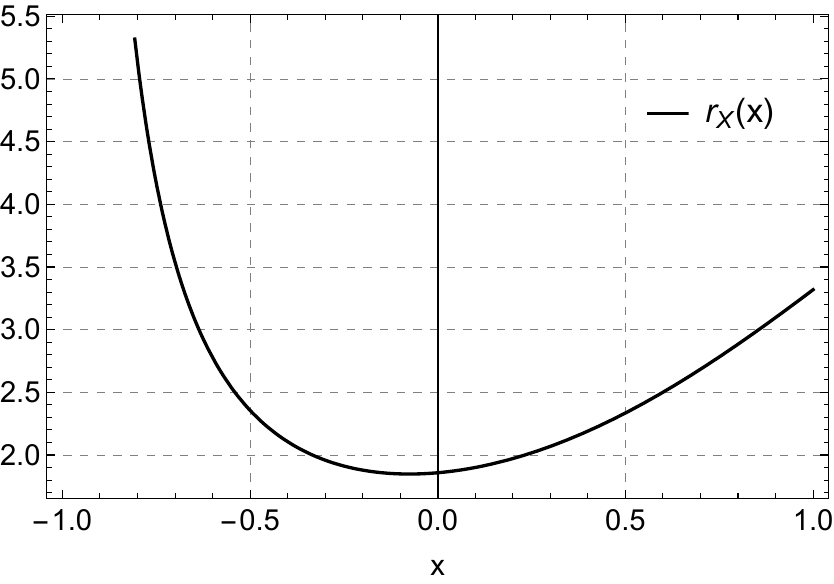}
\caption{Plot of $r_X(x)$}
\label{fig:rXcounter}
\end{figure}

Numerical evaluation of the dispersion measures yields
\[
\SD[X]=0.407>\GMD[X]=0.402.
\]
This calculation shows that, despite the increasing nature of the hazard rate, the GMD does not dominate the SD. Hence, an increasing $h_X(x)$ alone is insufficient to ensure the desired ordering. A parallel conclusion can be drawn for the reverse hazard rate. Indeed, by considering the reflected random variable $-X$, we obtain a distribution for which $r_{-X}(x)$ is decreasing (as a consequence of the increasing $h_X(x)$), while $h_{-X}(x)$ inherits the non-monotone behavior of $r_X(x)$. In this case as well, the SD continues to dominate the GMD.
\end{example}

Example~\ref{ex:counterexampleGMDdominance} makes clear that ensuring GMD dominance over the SD requires simultaneous control of both tails of the distribution. This insight naturally motivates our second characterization theorem, which provides a sufficient condition for $\SD[X]\le \GMD[X]$ and serves as a counterpart to Theorem~\ref{thm:SDgeGMD}.
\begin{theorem}
\label{thm:SDleGMD}
If $h_X(x)$ is increasing and $r_X(x)$ is decreasing, then
\[
\SD[X]\le \GMD[X].
\]
\end{theorem}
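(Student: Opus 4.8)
The plan is to mirror the proof of Theorem~\ref{thm:SDgeGMD}, once again reducing the dispersion ordering to a comparison of the mean excess function of $Y=|X-X'|$ against its value at the origin. Concretely, I would invoke Proposition~\ref{prop:OrderNonNegative}: it suffices to show that under the stated hypotheses one has $m_Y(t)\le m_Y(0)$ for all $t\ge0$, which immediately yields $\SD[X]\le\GMD[X]$. The entire argument therefore rests on establishing this reversed mean-excess inequality, and the two monotonicity assumptions enter precisely to orient the two factors appearing in the representation of $m_Y(t)$ from Proposition~\ref{prop:m_Y_Representation}.

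For the details, I would work with the first representation, $m_Y(t)=\e^{F}[C_X(X,t)\,h_X(X)^{-1}]/\e^{F}[C_X(X,t)]$. First, since $r_X(x)$ is decreasing, Proposition~\ref{prop:equivalences} (equivalence of (B1) and (B2)) gives that, for each fixed $t\ge0$, the map $x\mapsto C_X(x,t)$ is increasing. Second, since $h_X(x)$ is increasing, the factor $h_X(x)^{-1}$ is decreasing. Thus $C_X(X,t)$ and $h_X(X)^{-1}$ are oppositely monotone in $X$, and Chebyshev's sum inequality (in its reversed form for oppositely ordered functions) yields $\e^{F}[C_X(X,t)\,h_X(X)^{-1}]\le\e^{F}[C_X(X,t)]\,\e^{F}[h_X(X)^{-1}]$. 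Dividing by $\e^{F}[C_X(X,t)]$ and recalling that $C_X(x,0)=1$ so that $m_Y(0)=\e^{F}[h_X(X)^{-1}]$, this gives $m_Y(t)\le m_Y(0)$ for all $t\ge0$, and the claim follows from Proposition~\ref{prop:OrderNonNegative}.

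The conceptual crux -- and the point I expect to require the most care -- is that here \emph{both} hypotheses are genuinely indispensable, in contrast to Theorem~\ref{thm:SDgeGMD}. There, the single assumption that $h_X(x)$ is decreasing already forced $r_X(x)$ to be decreasing through Proposition~\ref{prop:r_Xh_Ximplications}, so both factors pointed in the same monotone direction and Chebyshev gave $m_Y(t)\ge m_Y(0)$. In the present setting neither of the conditions $h_X$ increasing and $r_X$ decreasing implies the other, and they are used separately: $r_X$ decreasing orients $C_X(x,t)$ while $h_X$ increasing orients $h_X(x)^{-1}$, placing the two factors in opposition and producing the reversed inequality. Getting this orientation correct is exactly what distinguishes the two regimes, and Example~\ref{ex:counterexampleGMDdominance} confirms that dropping either hypothesis breaks the conclusion. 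An equivalent route would use the symmetric second representation of Proposition~\ref{prop:m_Y_Representation}, pairing the decreasing-in-$x$ function $D_X(x,t)$ (from $h_X$ increasing) with the increasing factor $r_X(x)^{-1}$ (from $r_X$ decreasing) under $\e^{S}[\cdot]$; the two arguments are interchangeable.
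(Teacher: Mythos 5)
Your proposal is correct and follows essentially the same route as the paper's own proof: the same first representation of $m_Y(t)$ from Proposition~\ref{prop:m_Y_Representation}, the same orientation of $C_X(X,t)$ (increasing, via $r_X$ decreasing and Proposition~\ref{prop:equivalences}) against $h_X(X)^{-1}$ (decreasing), the same reversed Chebyshev sum inequality giving $m_Y(t)\le m_Y(0)$, and the same conclusion via Proposition~\ref{prop:OrderNonNegative}. Your added remarks on why both hypotheses are indispensable and on the symmetric second representation are accurate but not needed for the argument.
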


\begin{proof}
From Proposition~\ref{prop:m_Y_Representation}, the mean excess function $m_Y(t)$ admits the representation
\[
m_Y(t)
=\frac{\e^{F}\!\left[C_X(X,t)\,h_X(X)^{-1}\right]}
{\e^{F}[C_X(X,t)]}.
\]
Since $h_X(x)$ is increasing, its reciprocal $h_X(x)^{-1}$ is decreasing. Moreover, by Proposition~\ref{prop:equivalences}, the assumption that $r_X(x)$ is decreasing is equivalent to $C_X(x,t)$ being increasing in $x$ for all $t\ge0$. Therefore, the functions $C_X(X,t)$ and $h_X(X)^{-1}$ are oppositely monotone.

Applying Chebyshev’s sum inequality under these conditions yields
\[
m_Y(t)
=\frac{\e^{F}\!\left[C_X(X,t)\,h_X(X)^{-1}\right]}
{\e^{F}[C_X(X,t)]}
\le
\frac{\e^{F}\!\left[C_X(X,t)\right]\e^{F}\!\left[h_X(X)^{-1}\right]}
{\e^{F}[C_X(X,t)]}
=
\e^{F}\!\left[h_X(X)^{-1}\right]
= m_Y(0).
\]
Hence,
\[
m_Y(t)\le m_Y(0), \qquad \text{for all } t\ge 0.
\]

\medskip
It then follows from Proposition~\ref{prop:OrderNonNegative} that the inequality $m_Y(t)\le m_Y(0)$ for all $t\ge0$ implies
\[
\SD[X]\le \GMD[X],
\]
which completes the proof.
\end{proof}

Theorem~\ref{thm:SDleGMD} shows that a sufficient condition for the GMD to dominate the SD is that both tails of the distribution be simultaneously dampened. Intuitively, this suppresses the influence of extreme observations and shifts emphasis toward the central bulk of the distribution, where the GMD is more responsive than the SD. In this sense, GMD dominance reflects a balance between tail behavior and central concentration.

Furthermore, by Proposition~\ref{prop:equivalences}, the conditions of Theorem~\ref{thm:SDleGMD} are equivalent to both the CDF $F_X(x)$ and the DDF $S_X(x)$ being log-concave. Verifying these two structural properties directly, however, may be analytically cumbersome. Fortunately, a more tractable sufficient condition exists, formulated in terms of the density function. The following proposition establishes this implication.

\begin{proposition}
\label{prop:logConcave_f_X}
If the density function \(f_X(x)\) of $X$ is log-concave, then both the CDF \(F_X(x)\) and the DDF \(S_X(x)\) are log-concave. Equivalently, the hazard rate \(h_X(x)\) is increasing and the reverse hazard rate \(r_X(x)\) is decreasing. Consequently,
\[
\SD[X] \le \GMD[X].
\]
\end{proposition}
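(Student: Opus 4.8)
The plan is to reduce the entire statement to one classical fact: log-concavity of a density is inherited by both of its tail integrals. Once I establish that $f_X$ log-concave forces $h_X$ increasing and $r_X$ decreasing, the log-concavity of $S_X$ and $F_X$ is immediate from Proposition~\ref{prop:equivalences}, and the dispersion inequality $\SD[X]\le\GMD[X]$ then follows directly from Theorem~\ref{thm:SDleGMD}. Thus the whole burden of the proof is the monotonicity of the two rate functions, and I would spend all the effort there.

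First I would record the structural consequence that a log-concave density has interval support and is strictly positive on its interior, so that the ratios appearing below are well defined; the monotonicity claims are then to be read on that interior, in the almost-everywhere sense adopted throughout the paper. The central manipulation is to rewrite each rate function as the reciprocal of an integral of density ratios. Using the substitution $t=x+s$,
\[
h_X(x)=\frac{f_X(x)}{\int_x^{\infty}f_X(t)\,\mathrm{d}t}=\left(\int_0^{\infty}\frac{f_X(x+s)}{f_X(x)}\,\mathrm{d}s\right)^{-1},
\]
and symmetrically, with $t=x-s$,
\[
r_X(x)=\frac{f_X(x)}{\int_{-\infty}^{x}f_X(t)\,\mathrm{d}t}=\left(\int_0^{\infty}\frac{f_X(x-s)}{f_X(x)}\,\mathrm{d}s\right)^{-1}.
\]

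The key input is the increment characterization of concavity applied to $g=\log f_X$: for every fixed shift $s\ge0$, the difference $g(x+s)-g(x)$ is decreasing in $x$, while $g(x)-g(x-s)$ is likewise decreasing in $x$. Consequently, for each fixed $s\ge0$, the map $x\mapsto f_X(x+s)/f_X(x)$ is decreasing, so $\int_0^{\infty}f_X(x+s)/f_X(x)\,\mathrm{d}s$ is decreasing in $x$ and its reciprocal $h_X$ is increasing. The mirror computation uses that $x\mapsto f_X(x-s)/f_X(x)$ is increasing, whence $\int_0^{\infty}f_X(x-s)/f_X(x)\,\mathrm{d}s$ is increasing in $x$ and therefore $r_X$ is decreasing. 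With $h_X$ increasing and $r_X$ decreasing in hand, Proposition~\ref{prop:equivalences} supplies the equivalent log-concavity of $S_X$ and $F_X$, and Theorem~\ref{thm:SDleGMD} yields $\SD[X]\le\GMD[X]$.

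The main obstacle I anticipate is not analytic depth but the bookkeeping at the endpoints of the support: I must ensure the ratio representation and the ``monotonicity preserved under integration'' step remain valid where $f_X$ vanishes. This is handled by confining the monotonicity argument to the interior of the interval support, consistent with the paper's almost-everywhere conventions. I would also stress that invoking concavity of $\log f_X$ directly through its increments avoids any appeal to a second derivative of $f_X$, so no extra smoothness beyond log-concavity itself is required.
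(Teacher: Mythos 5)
Your proof is correct, but it takes a genuinely different route from the paper's. The paper establishes log-concavity of $F_X(x)$ and $S_X(x)$ first, by showing that the bivariate functions $f_X(t)\,\mathbf{1}_{t\le x}$ and $f_X(t)\,\mathbf{1}_{t\ge x}$ are jointly log-concave and then invoking the Pr\'ekopa--Leindler marginalization theorem; the monotonicity of $h_X(x)$ and $r_X(x)$ is then read off from Proposition~\ref{prop:equivalences}. You argue in the opposite direction: you prove the monotonicity of the two rate functions directly, via the representation $h_X(x)^{-1}=\int_0^{\infty}f_X(x+s)/f_X(x)\,\mathrm{d}s$ together with the increment characterization of concavity of $\log f_X$ (for each fixed $s\ge 0$ the likelihood ratio $f_X(x+s)/f_X(x)$ is decreasing in $x$, and monotonicity survives integration over $s$), and only then recover the log-concavity of $F_X(x)$ and $S_X(x)$ from the equivalence. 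Your version is more elementary --- it stays entirely in one dimension and needs no integral inequality beyond the fact that an integral of decreasing functions is decreasing --- and your attention to the interval support and the endpoints where $f_X$ vanishes is exactly the right bookkeeping. What the paper's heavier machinery buys is reusability: the same joint-log-concavity-plus-marginalization template is deployed again verbatim to prove closure under convolution in Proposition~\ref{prop:f_Xconvolutions}, which your ratio argument does not directly yield. Both proofs then conclude identically by applying Theorem~\ref{thm:SDleGMD}.
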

The result is proved in Appendix~A\ref{app:logConcave_f_X}.
\newline

Proposition~\ref{prop:logConcave_f_X} relies on the strong and well-known fact that log-concavity is preserved under marginalization. In particular, a log-concave density \(f_X(x)\) induces log-concave CDF and DDF, \(F_X(x)\) and \(S_X(x)\), or equivalently, an increasing hazard rate \(h_X(x)\) and a decreasing reverse hazard rate \(r_X(x)\). These properties, in turn, guarantee GMD dominance over the SD.

Below, we collect several examples that illustrate the applicability of Theorem~\ref{thm:SDleGMD} through Proposition~\ref{prop:logConcave_f_X}.

\begin{example}
\label{ex:SDleGMD}
This example presents several prominent families of distributions for which the ordering
\(\SD[X] \le \GMD[X]\) follows directly from Proposition~\ref{prop:logConcave_f_X}. In each case,
the conclusion is driven by the log-concavity of the density function, which guarantees the
required monotonicity of the hazard and reverse hazard rates.

\begin{itemize}
\item[(1)] Let \(X \sim \mathrm{Gamma}(\alpha,1)\) with density
\[
f_X(x) = \frac{1}{\Gamma(\alpha)} x^{\alpha-1} \exp(- x), \qquad x \ge 0.
\]
Taking logarithms and computing the second derivative yields
\[
\left(\log f_X(x)\right)^{''}=\dfrac{1-\alpha}{x^2}.
\]
Thus, for \(\alpha \ge 1\), the density is log-concave. This implies an increasing hazard rate
\(h_X(x)\) (see Example~3.11 in \cite{Klugman2012a}) and a decreasing reverse hazard rate
\(r_X(x)\). As in Example~\ref{ex:SDgeGMD}, we have \(\SD[X]=\sqrt{\alpha}\).
While a closed-form expression for \(\GMD[X]\) is omitted, Figure~\ref{fig:GammaGMD}
plots the difference of both measures as function of \(\alpha\) and confirms that the GMD dominates the SD for all
\(\alpha \ge 1\), in accordance with Proposition~\ref{prop:logConcave_f_X}.

\begin{figure}[H]
\begin{center}
\includegraphics[scale=0.75]{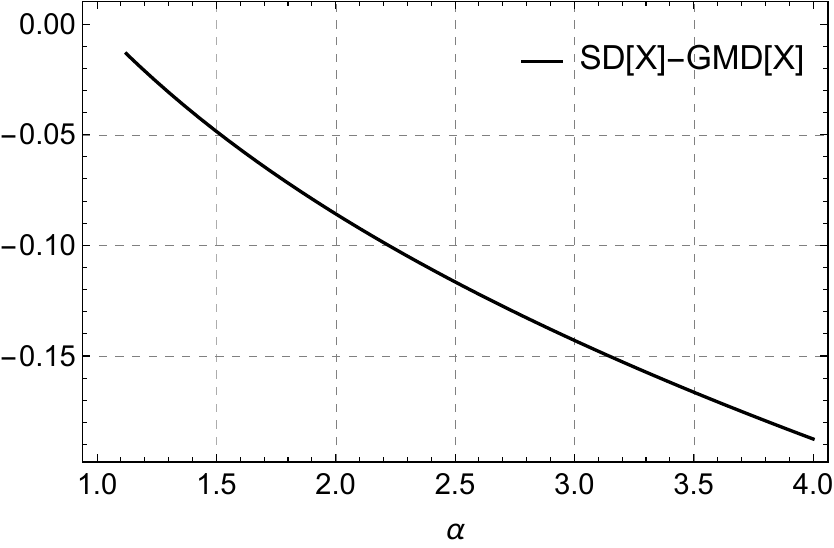}
\end{center}
\caption{Plot of $\SD[X]-\GMD[X]$ as function of $\alpha$}
\label{fig:GammaGMD}
\end{figure}

\item[(2)] Suppose \(X \sim \mathrm{Weibull}(\alpha,1)\) with density 
\[
f_X(x) = \alpha x^{\alpha-1} \exp(- x^{\alpha}), \qquad x \ge 0.
\]
The second derivative of the logarithm of the density is given by
\[
\left(\log f_X(x)\right)^{''}=\frac{(1-\alpha)\left(\alpha x^{\alpha}+1\right)}{x^2}.
\]
Hence, the density is log-concave whenever \(\alpha \ge 1\). Explicit expressions for
\(\SD[X]\) and \(\GMD[X]\) are provided in Example~\ref{ex:SDgeGMD}. Their difference
comparison in Figure~\ref{fig:WeibullGMD} illustrates the dominance of the GMD over the SD, as
predicted by Proposition~\ref{prop:logConcave_f_X}.

\begin{figure}[H]
\begin{center}
\includegraphics[scale=0.75]{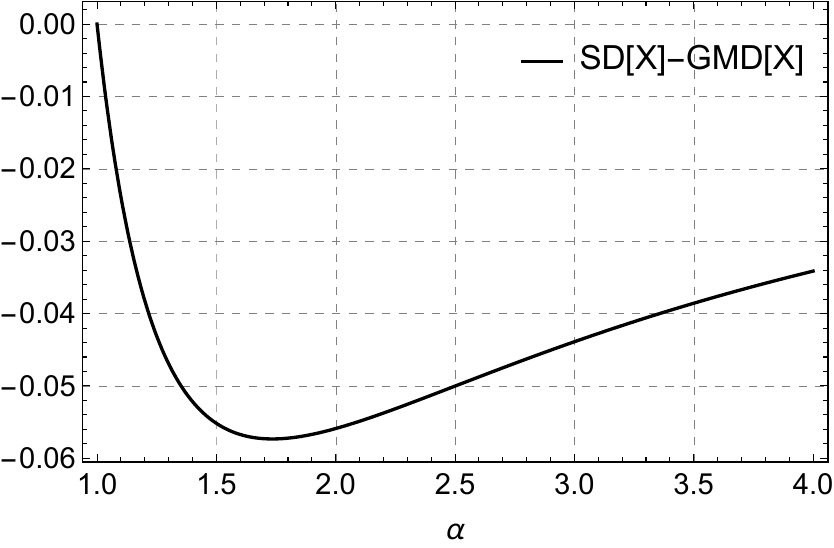}
\end{center}
\caption{Plot of $\SD[X]-\GMD[X]$ as function of $\alpha$}
\label{fig:WeibullGMD}
\end{figure}

\item[(3)] If \(X\) is standard normal, then its density
\[
f_X(x) = \frac{1}{\sqrt{2\pi}}
\exp\!\left(-\frac{x^2}{2}\right),\quad x\in\RR,
\]
is log-concave on \(\mathbb{R}\). Consequently, both \(F_X(x)\) and \(S_X(x)\) are
log-concave, and Proposition~\ref{prop:logConcave_f_X} implies
\(\SD[X] \le \GMD[X]\). Direct computation yields
\(\SD[X]=1\) and \(\GMD[X]=\dfrac{2}{\sqrt{\pi}}\), confirming the ordering.

\item[(4)] Let \(X \sim \mathrm{Beta}[\alpha,\beta]\) with density
\[
f_X(x) = \dfrac{1}{\mathrm{B}(\alpha,\beta)}\,x^{\alpha-1}(1-x)^{\beta-1},
\quad x\in[0,1].
\]
Then
\[
\left(\log f_X(x)\right)^{''}
=\dfrac{1-\alpha}{x^2}+\dfrac{1-\beta}{(1-x)^2}.
\]
If \(\alpha,\beta \ge 1\), the density is log-concave, implying
\(\SD[X] \le \GMD[X]\). For simplicity, let \(\beta=1\). In this case,
\[
\SD[X]=\sqrt{\frac{\alpha}{(\alpha+1)^2(\alpha+2)}}
\quad \mathrm{and} \quad
\GMD[X]=\frac{2\alpha}{2\alpha^2+3\alpha+1}.
\]
A straightforward algebraic calculation gives
\[
\GMD[X]^2-\SD[X]^2
=\frac{\alpha(4\alpha-1)}{(\alpha+1)^2(\alpha+2)(2\alpha+1)^2},
\]
which is nonnegative for all \(\alpha \ge \dfrac{1}{4}\), and in particular for
\(\alpha \ge 1\).

\item[(5)] A logistic random variable \(X\) has density
\[
f_X(x) = \frac{\exp(-x)}{(1+\exp(-x))^2}, \qquad x \in \mathbb{R}.
\]
The second derivative of its log-density is
\[
\left(\log f_X(x)\right)^{''}
=-\frac{2 \exp(x)}{\left(\exp(x)+1\right)^2},
\]
which is strictly negative, establishing log-concavity. As a result,
\[
\SD[X]=\dfrac{\pi}{\sqrt{3}} < \GMD[X]=2.
\]
\end{itemize}
\end{example}
Example \ref{ex:SDleGMD} highlighted the versatility and tractability of Proposition \ref{prop:logConcave_f_X}, which provides a convenient and powerful criterion for establishing dominance of the GMD measure. Log-concavity of the density function is a structurally strong assumption: it enforces substantial regularity on both the CDF $F_X(x)$ and the DDF $S_X(x)$, thereby guaranteeing the desired ordering. The converse implication, however, does not generally hold. While log-concavity of both the CDF and DDF does not suffice to imply log-concavity of the density $f_X(x)$, it nevertheless remains sufficient to ensure the ordering between the SD and the GMD. The following example illustrates the strictness of this implication.

\begin{example}
\label{ex:counterexample_f_X_notlogconcave}
This example is constructed as a slight modification of the illustration in Example \ref{ex:counterexampleGMDdominance}. Suppose that $X$ has a distribution with DDF
\[
S_X(x)=1-\frac{\mathrm{erfi}\left(\frac{x}{2}\right)}{\mathrm{erfi}\left(\frac{1}{2}\right)},\quad x\in[0,1],
\]
where $\mathrm{erfi}(\cdot)$ denotes the imaginary error function. The corresponding second derivatives of $\log F_X(x)$ and $\log S_X(x)$ are given by
\[
\left(\log S_X(x)\right)^{''}=-\dfrac{h_X(x)}{2S_X(x)}\left(2f_X(x)+xS_X(x)\right)
\quad\text{and}\quad
\left(\log F_X(x)\right)^{''}=-\dfrac{r_X(x)}{2F_X(x)}\left(2f_X(x)-xF_X(x)\right).
\]
From these expressions, it follows immediately that $\left(\log S_X(x)\right)^{''}\le0$, implying that $S_X(x)$ is log-concave on $[0,1]$. Moreover, it can be shown analytically that
\[
\mathrm{erfi}\left(w\right)\le \dfrac{\exp\left(w^2\right)}{w\sqrt{\pi}},\quad w\in\left[0,\dfrac{1}{2}\right],
\]
which, upon setting $w=\dfrac{x}{2}$, yields
\[
F_X(x)\le \dfrac{2}{x}f_X(x).
\]
Consequently, $\left(\log F_X(x)\right)^{''}\le0$, and hence $F_X(x)$ is also log-concave. However, examining the density of $X$ reveals
\[
\log f_X(x)=\dfrac{x^2}{4}-\log\left(\sqrt{\pi}\,\mathrm{erfi}\left(\dfrac{1}{2}\right)\right),
\]
which is a convex function. Equivalently, $f_X(x)$ is log-convex, and therefore Proposition \ref{prop:logConcave_f_X} no longer applies. Direct computation of the dispersion measures confirms that
\[
\SD[X]=0.29<\GMD[X]=0.34,
\]
as expected from Theorem \ref{thm:SDleGMD}.
\end{example}

When the log-concavity conditions of $f_X(x)$, $F_X(x)$ and $S_X(x)$ are compared with the log-convexity framework in Section \ref{sec:SDgeGMD}, an additional implication asymmetry becomes apparent: even if $f_X(x)$ is log-convex, the DDF $S_X(x)$ (or the CDF $F_X(x)$) may fail to be log-convex, as illustrated in Example \ref{ex:counterexample_f_X_notlogconcave}, and conversely, log-convexity of $S_X(x)$ (or of $F_X(x)$) does not guarantee log-convexity of $f_X(x)$, as shown in Example \ref{ex:SDgeGMD}. This contrast highlights that log-concavity of densities offers a fundamentally stronger structural condition for establishing the ordering between the SD and the GMD than log-convexity.

Beyond its invariance under affine transformations, dominance of the GMD over the SD is preserved under independent convolutions. We conclude this section by recalling a fundamental closure property of log-concave densities under summation of random variables.

\begin{proposition}
\label{prop:f_Xconvolutions}
If $X_1$ and $X_2$ are independent random variables with log-concave densities, then the density of $X_1+X_2$ is also log-concave.
\end{proposition}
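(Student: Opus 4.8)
The plan is to reduce the claim to the single deep fact already invoked in the proof of Proposition~\ref{prop:logConcave_f_X}, namely that log-concavity is preserved under marginalization (Pr\'ekopa's theorem). First I would write the density of the sum as the convolution
\[
f_{X_1+X_2}(z)=\int_{\RR} f_{X_1}(x)\,f_{X_2}(z-x)\,\d x,
\]
which holds by the independence of $X_1$ and $X_2$.

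Next I would introduce the two-variable integrand $g(x,z)=f_{X_1}(x)\,f_{X_2}(z-x)$ and verify that it is jointly log-concave on $\RR^2$. Taking logarithms, $\log g(x,z)=\log f_{X_1}(x)+\log f_{X_2}(z-x)$. The first summand is concave as a function on $\RR^2$ because it is a concave function of $x$ alone (constant in $z$); the second is the composition of the concave map $\log f_{X_2}$ with the affine map $(x,z)\mapsto z-x$, and concavity is preserved under affine precomposition. Being a sum of two concave functions, $\log g$ is concave, so $g$ is log-concave on $\RR^2$.

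Finally, I would observe that $f_{X_1+X_2}(z)=\int_{\RR} g(x,z)\,\d x$ is precisely the marginal of the log-concave function $g$ obtained by integrating out the variable $x$. Invoking the closure of log-concavity under marginalization then yields that $f_{X_1+X_2}(z)$ is log-concave in $z$, as required.

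The routine steps---the convolution formula and the affine-composition argument for joint log-concavity---present no difficulty. The genuine content lies entirely in the marginalization theorem, whose proof rests on the Pr\'ekopa--Leindler inequality (a functional form of the Brunn--Minkowski inequality). Since this same tool already underpins Proposition~\ref{prop:logConcave_f_X}, the main obstacle is not to reprove it but to set up the integrand so that the theorem applies: specifically, to recognize the additive structure of $\log g$ through the affine change of variables $(x,z)\mapsto z-x$, which is what converts the one-dimensional convolution into a two-dimensional marginalization of a jointly log-concave function.
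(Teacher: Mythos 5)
Your proof is correct and follows essentially the same route as the paper: both form the two-variable integrand $g(x,z)=f_{X_1}(x)f_{X_2}(z-x)$, establish its joint log-concavity, and invoke the Pr\'ekopa--Leindler marginalization theorem to conclude. The only cosmetic difference is that you verify joint log-concavity additively via $\log g$ and affine precomposition (implicitly using the convention $\log 0=-\infty$), whereas the paper checks the multiplicative inequality $g(\lambda(s_1,x_1)+(1-\lambda)(s_2,x_2))\ge g(s_1,x_1)^{\lambda}g(s_2,x_2)^{1-\lambda}$ directly.
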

The proof is presented in Appendix~A\ref{app:f_Xconvolutions}
\newline

If both \(f_{X_1}(x)\) and \(f_{X_2}(x)\) are log-concave, Proposition~\ref{prop:f_Xconvolutions} ensures that the density of their sum, \(X_1 + X_2\), is also log-concave. Applying Proposition~\ref{prop:logConcave_f_X} to each marginal then yields
\[
\SD[X_1] \le \GMD[X_1]
\quad \text{and} \quad
\SD[X_2] \le \GMD[X_2].
\]
Since \(f_{X_1+X_2}(s)\) is log-concave, Proposition~\ref{prop:logConcave_f_X} applies once more, implying
\[
\SD[X_1 + X_2] \le \GMD[X_1 + X_2].
\]
\section{Extensions}
\label{sec:extensions}
The results in Sections \ref{sec:SDgeGMD} and \ref{sec:SDleGMD} provide a broad characterization of the ordering between the SD and the GMD under monotonicity assumptions on the hazard rate and reverse hazard rate functions. In this section, we extend these findings to two particularly relevant settings that naturally arise in applications.

The first setting concerns truncated versions of a random variable, which typically correspond to tail conditional distributions. Such distributions appear frequently in risk theory, reliability, and extreme value analysis. We begin by introducing the relevant notions.

\begin{definition}
\label{def:truncatedX}
Let
\[
X_u^+ := (X \mid X > u)
\quad \text{and} \quad
X_u^- := (X \mid X \le u)
\]
denote the lower- and upper-truncated versions of $X$ at the threshold $u$, respectively.
\end{definition}

The dispersion measures of $X_u^+$ and $X_u^-$ are defined analogously to those of $X$, by considering deviations from independent copies of the truncated variables. Let $(X_u^+)'$ and $(X_u^-)'$ denote independent copies of $X_u^+$ and $X_u^-$, respectively.

\begin{definition}
\label{def:SDGMDtail}
The right- and left-tail versions of the SD and the GMD measures are defined by replacing $X$ with $X_u^+$ and $X_u^-$, respectively, in Equations \ref{eq:SD} and \ref{eq:GMD}. Explicitly,
\begin{align*}
\SD[X_u^+]
&=
\sqrt{\dfrac{1}{2}\,\E\left[\left(X_u^+-(X_u^+)'\right)^2\right]}
=
\sqrt{\dfrac{1}{2}\,\E[(X-X')^2 \mid X>u,\,X'>u]},
\\[4pt]
\GMD[X_u^+]
&=
\E\!\left[|X_u^+-(X_u^+)'|\right]
=
\E[|X-X'| \mid X>u,\,X'>u],
\\[6pt]
\SD[X_u^-]
&=
\sqrt{\dfrac{1}{2}\,\E\left[\left(X_u^--(X_u^-)'\right)^2\right]}
=
\sqrt{\dfrac{1}{2}\,\E[(X-X')^2 \mid X\le u,\,X'\le u]},
\\[4pt]
\GMD[X_u^-]
&=
\E\!\left[|X_u^--(X_u^-)'\right]
=
\E[|X-X'| \mid X\le u,\,X'\le u].
\end{align*}
\end{definition}

Tail versions of the SD and the GMD measures have appeared in the literature as measures of risk dispersion; see, for example, \cite{Furman2017} and \cite{Chen2025a}. In these contexts, the truncation level $u$ is often chosen as the Value-at-Risk of $X$ at a given prudence level $p\in(0,1)$. The following proposition extends the SD--GMD dominance result of Theorem \ref{thm:SDgeGMD} to truncated distributions.

\begin{proposition}
\label{prop:tailSDgeGMD}
Suppose there exists $u^*\in\RR$ such that:
\begin{itemize}
\item[(i)] if $h_{X_{u^*}^+}(x)$ is decreasing, then for all $u\ge u^*$,
\[
\SD[X_u^+] \ge \GMD[X_u^+];
\]
\item[(ii)] if $r_{X_{u^*}^-}(x)$ is increasing, then for all $u\le u^*$,
\[
\SD[X_u^-] \ge \GMD[X_u^-].
\]
\end{itemize}
\end{proposition}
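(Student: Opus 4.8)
The plan is to exploit a simple but decisive invariance: lower truncation leaves the hazard rate unchanged (up to a restriction of its domain), while upper truncation leaves the reverse hazard rate unchanged. This collapses the entire proposition into a direct application of Theorem~\ref{thm:SDgeGMD} to the truncated variables themselves, so no new inequality needs to be established.

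First I would record the key identity. For the lower-truncated variable $X_u^+$, both the density and the survival function carry the same normalizing constant $S_X(u)$, so for every $x>u$ the constant cancels and
\[
h_{X_u^+}(x)=\frac{f_X(x)/S_X(u)}{S_X(x)/S_X(u)}=\frac{f_X(x)}{S_X(x)}=h_X(x).
\]
Thus truncating from below does not alter the hazard rate; it merely restricts it to the sub-support $(u,\infty)$. By the identical cancellation applied to the upper-truncated variable $X_u^-$, whose density and CDF share the constant $F_X(u)$, the reverse hazard rate satisfies $r_{X_u^-}(x)=r_X(x)$ for all $x\le u$.

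Next I would carry out the monotonicity-nesting step for part~(i). The hypothesis that $h_{X_{u^*}^+}(x)$ is decreasing means, via the identity above, that $h_X$ is decreasing on $(u^*,\infty)$. For any $u\ge u^*$ we have $(u,\infty)\subseteq(u^*,\infty)$, so $h_X$ remains decreasing there; equivalently $h_{X_u^+}(x)=h_X(x)$ is a decreasing function on its support. Since $X_u^+$ is itself a continuous, non-degenerate random variable with finite second moment and a decreasing hazard rate, Theorem~\ref{thm:SDgeGMD} applies to it verbatim and delivers $\SD[X_u^+]\ge\GMD[X_u^+]$. Part~(ii) then follows by the mirror argument: the assumption that $r_{X_{u^*}^-}(x)$ is increasing means $r_X$ is increasing on $(-\infty,u^*]$, hence increasing on $(-\infty,u]$ for every $u\le u^*$, so $r_{X_u^-}$ is increasing and the second branch of Theorem~\ref{thm:SDgeGMD} yields $\SD[X_u^-]\ge\GMD[X_u^-]$.

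There is no genuine analytical obstacle here, since the whole content resides in the invariance identity and the nesting of domains. The only point that warrants care is purely bookkeeping: confirming that each truncated variable continues to satisfy the standing hypotheses of Section~\ref{sec:preliminaries}, namely continuity, non-degeneracy, and a finite second moment, so that Theorem~\ref{thm:SDgeGMD} may legitimately be invoked; and verifying that restricting the domain to $(u,\infty)$ (respectively $(-\infty,u]$) preserves exactly the monotonicity that feeds the Chebyshev step underlying that theorem.
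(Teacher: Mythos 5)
Your proposal is correct and follows essentially the same route as the paper: both rest on the observation that the normalizing constant cancels so the hazard rate (resp.\ reverse hazard rate) is invariant under lower (resp.\ upper) truncation, whence the monotonicity persists on the nested sub-support and Theorem~\ref{thm:SDgeGMD} applies directly to $X_u^+$ (resp.\ $X_u^-$). The only cosmetic difference is that the paper writes the identity relative to $X_{u^*}^+$ while you write it relative to $X$ itself; the content is identical.
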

The proof is given in Appendix~A\ref{app:tailSDgeGMD}.
\newline

Proposition \ref{prop:tailSDgeGMD} shows that the decreasing property of the hazard rate is closed under lower truncation, while the increasing property of the reverse hazard rate is closed under upper truncation. Consequently, if a random variable $X$ admits a point $u^*$ in its right tail such that $h_X(x)$ is decreasing for all $x\ge u^*$, then every truncated variable $X_u^+$ with $u\ge u^*$ exhibits SD dominance. An analogous conclusion holds for $X_u^-$ when $r_X(x)$ is increasing in the left tail.

The following example illustrates the claims of Proposition \ref{prop:tailSDgeGMD}.

\begin{example}
\label{ex:SDgeGMDtail}
Let $X$ be a random variable with DDF
\[
S_X(x)
=
\exp\!\left(
-x-\frac{1}{\theta^2}\left(1-(\theta x+1)\exp(-\theta x)\right)
\right),
\quad x\ge 0,\; \theta>0.
\]
The hazard rate and its derivative are given by
\[
h_X(x)=x\exp(-\theta x)+1,
\qquad
h_X'(x)=\exp(-\theta x)(1-\theta x).
\]
It follows that $h_X(x)$ is increasing on $[0,1/\theta]$ and decreasing on $[1/\theta,\infty)$. Hence, for any $u\ge u^*:=1/\theta$, Proposition \ref{prop:tailSDgeGMD} guarantees SD dominance for $X_u^+$.

Setting $\theta=0.1$ and plotting the difference $\SD[X_u^+]-\GMD[X_u^+]$ as a function of $u$ yields Figure \ref{fig:SDgeGMDtail}.
\begin{figure}[H]
\centering
\includegraphics[scale=0.75]{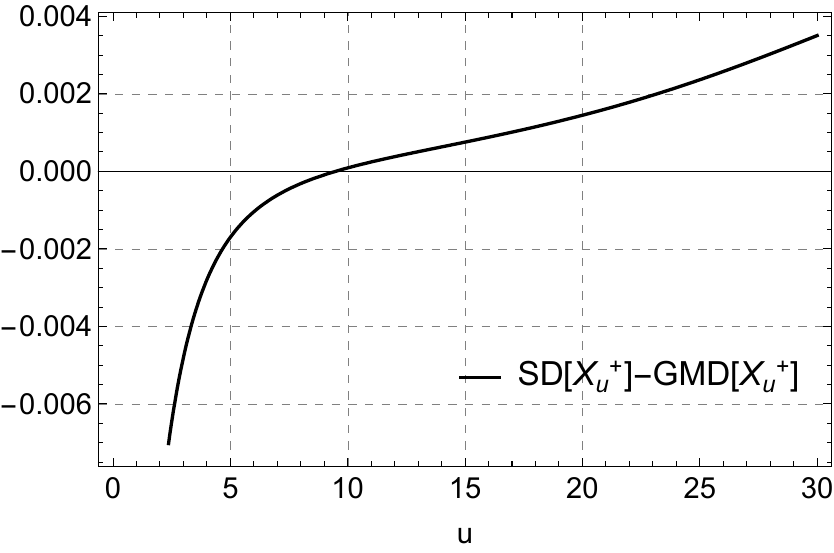}
\caption{Plot of $\SD[X_u^+]-\GMD[X_u^+]$ as a function of $u$}
\label{fig:SDgeGMDtail}
\end{figure}
Since the hazard rate is decreasing for $x\ge 10$, Figure \ref{fig:SDgeGMDtail} confirms that $\SD[X_u^+]$ dominates $\GMD[X_u^+]$ for all $u\ge 10$, in agreement with Proposition \ref{prop:tailSDgeGMD}.
\end{example}

A symmetric construction of Example \ref{ex:SDgeGMDtail} may be obtained by considering the reflected random variable $-X$. In this case, the reverse hazard rate $r_X(x)$ is increasing for $x\le -1/\theta$ and decreasing on $[-1/\theta,0]$. Consequently, $\SD[X_u^-]$ dominates $\GMD[X_u^-]$ for all $u\le -1/\theta$, for instance when $\theta=0.1$ in that example.

The second scenario, in which the GMD dominates the SD, can be studied under the sufficient condition of Proposition~\ref{prop:logConcave_f_X}. Specifically, if the density of the right (respectively, left) tail is log-concave, then we expect the GMD of $X_u^+$ (respectively, $X_u^-$) to dominate the SD.

The motivation for imposing this stronger assumption lies in the behavior of hazard-type functions under truncation. While the monotonicity of the hazard rate $h_X(x)$ is preserved under lower truncation, as established in Proposition~\ref{prop:tailSDgeGMD}, this property generally fails under upper truncation. Conversely, the reverse hazard rate $r_X(x)$ enjoys preservation of monotonicity under upper truncation but not under lower truncation. Since GMD dominance requires the preservation of monotonicity for both $h_X(x)$ and $r_X(x)$, the conditions of Theorem~\ref{thm:SDleGMD}, which rely solely on the log-concavity of $S_X(x)$ and $F_X(x)$, are insufficient in this context. The additional structure provided by a log-concave density $f_X(x)$ ensures that both hazard functions retain the necessary monotonicity, thereby guaranteeing the desired ordering.

The following proposition formalizes this result for tail-truncated variables.

\begin{proposition}
\label{prop:SDleGMDtail}
If there exists $u^*\in\RR$ such that:
\begin{itemize}
\item[(i)] $f_{X_{u^*}^+}(x)$ is log-concave, then for all $u\ge u^*$,
\[
\SD[X_u^+] \le \GMD[X_u^+];
\]
\item[(ii)] $f_{X_{u^*}^-}(x)$ is log-concave, then for all $u\le u^*$,
\[
\SD[X_u^-] \le \GMD[X_u^-].
\]
\end{itemize}
\end{proposition}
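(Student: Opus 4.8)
The plan is to reduce the proposition to Proposition~\ref{prop:logConcave_f_X} by showing that log-concavity of a tail density is automatically inherited under any further truncation in the same direction. The crucial structural observation is that truncating $X$ at a higher threshold $u \ge u^*$ does not create a genuinely new density: it merely restricts the already-truncated density $f_{X_{u^*}^+}$ to a smaller interval and rescales it by a positive normalizing constant.

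For part~(i), I would first record the explicit densities. Since $f_{X_u^+}(x) = f_X(x)/S_X(u)$ for $x>u$ and $f_{X_{u^*}^+}(x) = f_X(x)/S_X(u^*)$ for $x>u^*$, it follows that for any $u \ge u^*$ and any $x > u$,
\[
f_{X_u^+}(x) = \frac{S_X(u^*)}{S_X(u)}\, f_{X_{u^*}^+}(x).
\]
Hence $f_{X_u^+}$ is a positive constant multiple of $f_{X_{u^*}^+}$ restricted to the sub-interval $(u,\infty) \subseteq (u^*,\infty)$.

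Next I would invoke two elementary closure properties of log-concavity. First, multiplying a log-concave function by a positive constant preserves log-concavity, since $\log(c\,g) = \log c + \log g$ only shifts $\log g$ by a constant. Second, restricting a log-concave function to a sub-interval of its support preserves log-concavity, because concavity of $\log g$ is inherited on every sub-interval. Combining these with the displayed identity shows that $f_{X_u^+}$ is log-concave on $(u,\infty)$ for every $u \ge u^*$, so Proposition~\ref{prop:logConcave_f_X} applies to $X_u^+$ and yields $\SD[X_u^+] \le \GMD[X_u^+]$.

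Part~(ii) follows by the mirror argument on the left tail: for $u \le u^*$ and $x \le u$, the identity $f_{X_u^-}(x) = \big(F_X(u^*)/F_X(u)\big)\, f_{X_{u^*}^-}(x)$ exhibits $f_{X_u^-}$ as a positive rescaling of $f_{X_{u^*}^-}$ restricted to $(-\infty,u] \subseteq (-\infty,u^*]$, so log-concavity is again inherited and Proposition~\ref{prop:logConcave_f_X} delivers the conclusion. The proof involves no difficult computation; the one conceptual point --- and precisely the reason the stronger density-level hypothesis is indispensable here, rather than the weaker log-concavity of $F_X$ and $S_X$ used in Theorem~\ref{thm:SDleGMD} --- is that log-concavity of the density is exactly the property stable under restriction to sub-intervals, whereas simultaneous log-concavity of $F_X$ and $S_X$ need not survive one-sided truncation, as foreshadowed in the discussion preceding the proposition.
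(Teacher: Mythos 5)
Your proposal is correct and follows essentially the same route as the paper's own proof: both express $f_{X_u^+}$ as a positive constant multiple of $f_{X_{u^*}^+}$ restricted to $(u,\infty)$, conclude that log-concavity is inherited, and then invoke Proposition~\ref{prop:logConcave_f_X}. Your explicit remark that log-concavity survives restriction to a sub-interval is a detail the paper leaves implicit, but the argument is the same.
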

The proof is shown in Appendix~A\ref{app:SDleGMDtail}.
\newline

A direct implication of Proposition~\ref{prop:SDleGMDtail} is that once the density of $X$ exhibits log-concave behavior beyond a threshold $u^*$ in the right tail, or below $u^*$ in the left tail, this ordering persists for all more extreme truncation points. In other words, the dominance of the GMD over the SD is inherited by all sufficiently deep tail distributions. We illustrate this phenomenon in the following example.

\begin{example}
\label{ex:SDleGMDtail}
Let $X$ be a random variable with density given by a mixture of two normal distributions with zero means and distinct variances:
\[
f_X(x)
= \frac{ q}{\sqrt{2 \pi } \sigma_1}\,\exp\!\left(-\frac{x^2}{2 \sigma_1^2}\right)
+ \frac{1-q }{\sqrt{2 \pi } \sigma_2}\,\exp\!\left(-\frac{x^2}{2 \sigma_2^2}\right),
\quad x\in\RR,
\]
where $\sigma_1,\sigma_2>0$, $\sigma_1\neq\sigma_2$, and $q\in(0,1)$.

As shown in Example~\ref{ex:SDleGMD}, a single normal density is log-concave. However, unlike log-convexity, log-concavity is not preserved under mixtures, and hence $f_X(x)$ is not globally log-concave. Nevertheless, its tail behavior remains well-behaved. Indeed, one can show that
\[
\lim_{x\to -\infty}\left(\log f_X(x)\right)''
=
\lim_{x\to \infty}\left(\log f_X(x)\right)''
=
-\dfrac{1}{\max(\sigma_1^2,\sigma_2^2)},
\]
which implies that both tails of $f_X(x)$ are log-concave.

For concreteness, take $\sigma_1=\tfrac{1}{2}$, $\sigma_2=2$, and $q=\tfrac{3}{4}$. A numerical investigation reveals that $\left(\log f_X(x)\right)''<0$ for all $x\ge u^*=2$ and all $x\le v^*=-2$. These points therefore define thresholds beyond which the right and left tail densities are log-concave. By Proposition~\ref{prop:SDleGMDtail}, we consequently expect GMD dominance for $X_u^+$ for all $u\ge u^*$ and for $X_v^-$ for all $v\le v^*$.

This behavior is confirmed numerically by plotting the difference between the SD and the GMD measures of $X_u^+$ and $X_v^-$ as functions of $u$ and $v$, respectively:
\begin{figure}[H]
\centering
\includegraphics[scale=0.75]{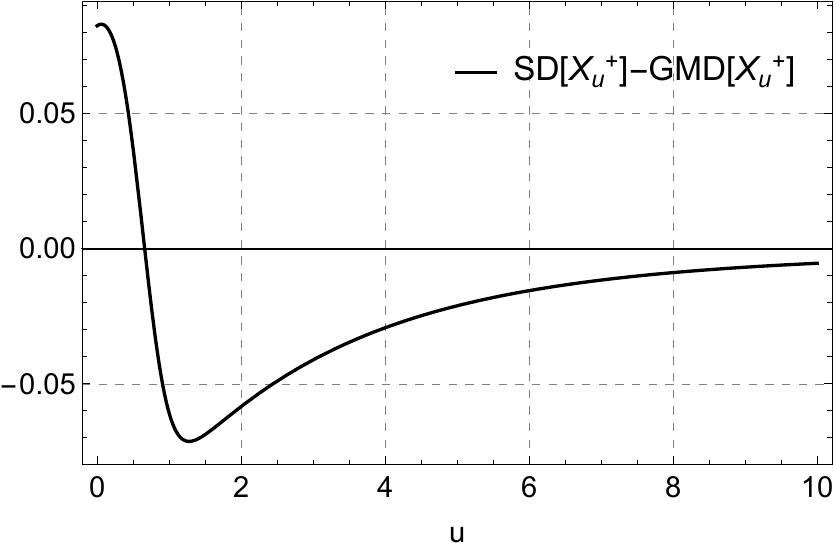}
\caption{Plot of $\SD[X_u^+]-\GMD[X_u^+]$ as a function of $u$}
\label{fig:SDgeGMDRtail}
\end{figure}

\begin{figure}[H]
\centering
\includegraphics[scale=0.75]{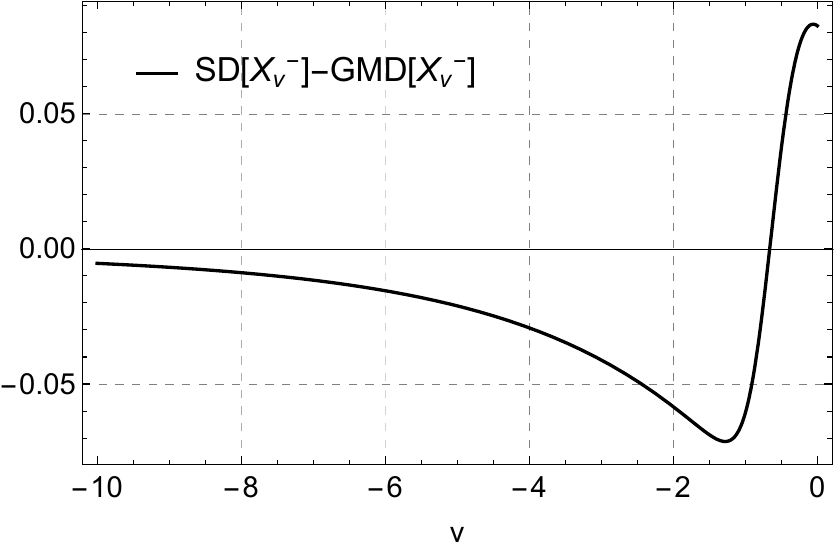}
\caption{Plot of $\SD[X_v^-]-\GMD[X_v^-]$ as a function of $v$}
\label{fig:SDgeGMDLtail}
\end{figure}

As predicted, Figures~\ref{fig:SDgeGMDRtail} and~\ref{fig:SDgeGMDLtail} clearly demonstrate GMD dominance beyond the thresholds $u^*=2$ and $v^*=-2$, respectively. Moreover, the difference between the two dispersion measures diminishes as $u\to\infty$ and $v\to-\infty$, indicating that the tail distributions become increasingly concentrated and the two measures asymptotically coincide.
\end{example}
The theory of SD--GMD ordering developed hitherto has focused on continuous random variables. We conclude this section by discussing the second setting which investigates the order between the SD and the GMD in the discrete realm. Without loss of generality, let $X$ be a random variable whose support is contained in the integers, $\mathbb{Z}$. Throughout, $f_X(x)$ denotes the probability mass function (PMF) of $X$, while the CDF $F_X(x)$ and the DDF $S_X(x)$ are defined as usual.

Furthermore, the functions $h_X(x)$, $r_X(x)$, $D_X(x,t)$, and $C_X(x,t)$ are defined analogously to Definition~\ref{def:importantfunctions}, with the appropriate discrete modification for $h_X(x)$, namely
\[
h_X(x)
= \mathbb{P}(X=x \mid X \ge x)
= \dfrac{f_X(x)}{S_X(x-1)}.
\]

For the quantities stated above, Proposition~\ref{prop:equivalences} remains valid in the discrete case, where log-convexity (log-concavity) of a distribution-related function $G_X(x)$ (whether $f_X(x)$, $S_X(x)$, or $F_X(x)$) is understood in its discrete sense, that is,
\[
G_X(x)^2 \le (\ge)\, G_X(x-1)\, G_X(x+1),
\]
for all integers $x$ in the support of $X$.

Finally, for a non-negative random variable $X$ and for $t \ge 0$, the mean excess function $m_X(t)$ is defined as in Definition~\ref{def:meanexcessfunction}. In particular, $m_X(0) \ge \e[X]$, with equality if and only if $\mathbb{P}(X=0)=0$.

We now present the discrete analogues of Propositions~\ref{prop:OrderNonNegative} and~\ref{prop:m_Y_Representation}.

\begin{proposition}
\label{prop:orderNonNegativeDiscrete}
Let $Y$ be a non-negative random variable. If
\[
m_Y(t) \ge (\le)\, \e[Y] + \dfrac{1}{2}
\quad \text{for all } t \ge 0,
\]
then
\[
\sqrt{\dfrac{1}{2}\, \e[Y^2]} \ge (\le)\, \e[Y].
\]
\end{proposition}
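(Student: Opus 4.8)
The plan is to mirror the continuous argument behind Proposition~\ref{prop:OrderNonNegative}, but to track the discrete correction terms that arise from summation rather than integration. Since $Y$ is non-negative and integer-valued, the target inequality $\sqrt{\tfrac12\e[Y^2]}\ge(\le)\,\e[Y]$ is equivalent to $\e[Y^2]\ge(\le)\,2(\e[Y])^2$, so everything reduces to bounding the second moment in terms of the first. The two facts I would assemble first are the discrete tail-sum identities
\[
\e[Y]=\sum_{i\ge 0}S_Y(i),\qquad \e[Y^2]=\sum_{i\ge 0}(2i+1)\,S_Y(i),
\]
which follow from $k=\sum_{i=0}^{k-1}1$ and $k^2=\sum_{i=0}^{k-1}(2i+1)$ together with Tonelli's theorem for non-negative terms. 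The $(2i+1)$ weight is precisely the discrete feature that will ultimately produce the additive $\tfrac12$ in the hypothesis.

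Second, I would record the elementary mean-excess identity at integer thresholds. For integer $t\ge 0$, writing $S_Y(t)=\P(Y>t)$ and noting $(Y-t)\,\mathbbm{1}_{\{Y>t\}}=(Y-t)_+$, one has
\[
m_Y(t)\,S_Y(t)=\e\big[(Y-t)_+\big]=\sum_{i\ge t}S_Y(i).
\]
Summing this over integer $t\ge 0$ and interchanging the order of summation gives the key identity
\[
\sum_{t\ge 0}m_Y(t)\,S_Y(t)=\sum_{i\ge 0}(i+1)\,S_Y(i)=\tfrac12\e[Y^2]+\tfrac12\e[Y],
\]
where the last equality substitutes the two tail-sum identities above. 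This is the discrete counterpart of the continuous relation $\int_0^\infty m_Y(t)S_Y(t)\,\d t=\tfrac12\e[Y^2]$, the extra term $\tfrac12\e[Y]$ being the manifestation of discreteness.

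Finally, I would invoke the hypothesis. If $m_Y(t)\ge \e[Y]+\tfrac12$ for all $t\ge0$ (in particular at integer $t$, which is all the identity uses), then bounding $m_Y(t)$ from below inside the sum and using $\sum_{t\ge0}S_Y(t)=\e[Y]$ yields
\[
\tfrac12\e[Y^2]+\tfrac12\e[Y]=\sum_{t\ge 0}m_Y(t)\,S_Y(t)\ge\big(\e[Y]+\tfrac12\big)\e[Y]=(\e[Y])^2+\tfrac12\e[Y].
\]
Cancelling $\tfrac12\e[Y]$ gives $\e[Y^2]\ge 2(\e[Y])^2$, i.e. $\sqrt{\tfrac12\e[Y^2]}\ge\e[Y]$, and the reverse inequality follows by reversing every inequality sign. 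I expect the only delicate point to be the bookkeeping in the second-moment tail sum: getting the $(2i+1)$ weight right is exactly what forces the threshold to be $\e[Y]+\tfrac12$ rather than $m_Y(0)=\e[Y]/\P(Y>0)$ as in the continuous case, while the interchange of summations and the concluding algebra are routine.
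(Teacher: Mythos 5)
Your proposal is correct and follows essentially the same route as the paper's proof: both rest on the discrete tail-sum identities for $\e[Y]$ and $\e[Y^2]$, the identity $m_Y(t)S_Y(t)=\sum_{i\ge t}S_Y(i)$, and a summation interchange yielding $\sum_{t\ge 0}m_Y(t)S_Y(t)=\tfrac12\e[Y^2]+\tfrac12\e[Y]$, after which the hypothesis is inserted and the $\tfrac12\e[Y]$ term cancels. The only difference is cosmetic (you isolate the key identity before bounding, whereas the paper expands $\tfrac12\e[Y^2]$ directly), and your implicit restriction to integer-valued $Y$ matches the paper's own implicit assumption in this discrete setting.
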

\begin{proposition}
\label{prop:m_Y_RepresentationDiscrete}
Suppose $Y=|X-X'|$ for i.i.d.\ random variables $X$ and $X'$. Then the mean excess function $m_Y(t)$ admits the representation
\begin{equation}
\label{eq:m_YRepresentationDiscrete}
m_Y(t)
=\dfrac{\e^{F}\!\left[C_X(X-1,t)\,h_X(X)^{-1}\right]}{\e^{F}[C_X(X-1,t)]}
=\dfrac{\e^{S}\!\left[D_X(X,t)\,r_X(X)^{-1}\right]}{\e^{S}[D_X(X,t)]},
\end{equation}
where the expectations $\e^{F}[\cdot]$ and $\e^{S}[\cdot]$ are taken with respect to the probability measures
\[
Q^{F}(x)=\dfrac{F_X(x-1)}{\e[F_X(X-1)]}\, f_X(x),
\qquad
Q^{S}(x)=\dfrac{S_X(x)}{\e[S_X(X)]}\, f_X(x),
\]
respectively.
\end{proposition}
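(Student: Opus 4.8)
The plan is to start from the defining ratio $m_Y(t)=\e[(Y-t)^+]/\P(Y>t)$ and rewrite both numerator and denominator as sums over the law of $X$, mirroring the continuous argument behind Proposition~\ref{prop:m_Y_Representation} but carefully tracking the integer index shifts that produce the terms $C_X(X-1,t)$ and the weight $F_X(x-1)$. I would carry out the computation for integer $t\ge 0$, the natural range for an integer-valued $Y$; the unit shifts are precisely the discrete bookkeeping that replaces the continuous integration identities, and they suffice for the monotonicity checks required by Proposition~\ref{prop:orderNonNegativeDiscrete}.

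First I would handle the denominator. By the symmetry of the i.i.d.\ pair, $\P(Y>t)=2\,\P(X-X'>t)$, and conditioning on $X=x$ together with the discrete step identity $\P(X'<x-t)=F_X(x-1-t)$ (valid because $x$ and $t$ are integers) yields $\P(Y>t)=2\sum_x F_X(x-1-t)\,f_X(x)$. Since $C_X(x-1,t)=F_X(x-1-t)/F_X(x-1)$ and $Q^{F}(x)\propto F_X(x-1)f_X(x)$, the factor $F_X(x-1)$ cancels, so this sum equals $2\,\e[F_X(X-1)]\,\e^{F}[C_X(X-1,t)]$.

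For the numerator I would again invoke symmetry: since $(X-X'-t)^+$ vanishes unless $X>X'$, one has $\e[(Y-t)^+]=2\,\e[(X-X'-t)^+]$. Conditioning on $X=x$ and applying the discrete tail-sum identity $\e[(a-X')^+]=\sum_{k\le a-1}F_X(k)$ with $a=x-t$, then interchanging the order of summation over $\{k\le x-t-1\}=\{x\ge k+t+1\}$ and reindexing via $k=x-1-t$, gives $\e[(Y-t)^+]=2\sum_x F_X(x-1-t)\,S_X(x-1)$. Using the discrete hazard convention $h_X(x)^{-1}=S_X(x-1)/f_X(x)$, multiplying $C_X(X-1,t)$ by $h_X(X)^{-1}$ under $Q^{F}$ reproduces exactly the summand $F_X(x-1-t)S_X(x-1)$, so the numerator equals $2\,\e[F_X(X-1)]\,\e^{F}[C_X(X-1,t)h_X(X)^{-1}]$. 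Taking the ratio cancels both the factor $2$ and $\e[F_X(X-1)]$, delivering the first representation in \eqref{eq:m_YRepresentationDiscrete}. The second ($\e^{S}$) form follows by conditioning on $X'$ rather than $X$, using $\P(X>x'+t)=S_X(x'+t)$ and the dual tail-sum identity $\e[(X-b)^+]=\sum_{k\ge b}S_X(k)$, which produce $S_X$-weights and the reverse-hazard reciprocal $r_X(x)^{-1}=F_X(x)/f_X(x)$; consistency is confirmed by $\sum_x F_X(x-1-t)S_X(x-1)=\sum_x F_X(x)S_X(x+t)$ and the common value $\P(X-X'>t)$ of the two denominators.

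I expect the main obstacle to be the discrete index bookkeeping rather than any analytic difficulty: getting the unit shifts correct in $\P(X'<x-t)=F_X(x-1-t)$, in $\e[(a-X')^+]=\sum_{k\le a-1}F_X(k)$, and in $h_X(x)=f_X(x)/S_X(x-1)$, and checking that they conspire to yield precisely $C_X(X-1,t)$ together with the weight $F_X(x-1)$. A secondary subtlety is non-integer $t$, where the step identities acquire floor functions; the clean representation is therefore most naturally stated at integer thresholds.
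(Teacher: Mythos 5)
Your proposal is correct and follows essentially the same route as the paper's proof: symmetrize via $\P(Y>t)=2\P(X-X'>t)$, express numerator and denominator as double sums over the law of $X$ using discrete tail-sum identities, interchange the order of summation to reach $2\sum_{w}F_X(w-1-t)\,S_X(w-1)$, and recognize the summand as $C_X(w-1,t)\,h_X(w)^{-1}$ weighted by $Q^F$. Your explicit attention to the unit shifts (in $\P(X'<x-t)=F_X(x-t-1)$ and $h_X(x)^{-1}=S_X(x-1)/f_X(x)$) and the restriction to integer $t$ match what the paper's argument implicitly requires.
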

The proofs of both propositions are given in Appendices~A\ref{app:orderNonNegativeDiscrete} and~A\ref{app:m_Y_RepresentationDiscrete}, respectively.
\newline

A comparison of Propositions~\ref{prop:orderNonNegativeDiscrete} and~\ref{prop:m_Y_RepresentationDiscrete} with their continuous counterparts, Propositions~\ref{prop:OrderNonNegative} and~\ref{prop:m_Y_Representation}, reveals two key distinctions. First, the fundamental condition $m_Y(t)\ge(\le)\, m_Y(0)=\e[Y]$ in the continuous setting is replaced in the discrete case by
\[
m_Y(t)\ge(\le)\,\e[Y]+\dfrac{1}{2}.
\]
The appearance of the additional term $\tfrac{1}{2}$ reflects an intrinsic feature of the discrete framework, acting as a correction required to recover the appropriate SD--GMD ordering. 

A second distinction arises in the first representation of \eqref{eq:m_YRepresentationDiscrete}, where the argument of $C_X$ is shifted by $-1$. This adjustment is a direct consequence of the symmetrization inherent in the definition of $Y$ and manifests through the exclusion of the boundary point of the CDF. Although these differences may seem modest, they play a decisive role in shaping the precise order between the SD and the GMD in the discrete setting.

With these observations in mind, we next introduce a quantity that will be central to the discrete comparison of the SD and the GMD.
\begin{definition}
\label{def:concentrationMeasure}
The concentration value of a random variable $X$ is defined as
\[
\Lambda=\mathbb{P}(X=X')=\sum_x f_X(x)^2=\e[f_X(X)],
\]
where $X'$ denotes an independent copy of $X$.
\end{definition}

We also note that the monotonicity and support implications of Proposition~\ref{prop:r_Xh_Ximplications} remain valid in the discrete setting without modification.

\begin{theorem}
\label{thm:SDGMDDiscrete}
~
\begin{itemize}
\item[(i)] If $h_X(x)$ is decreasing or $r_X(x)$ is increasing, then
\[
\SD[X]>\GMD[X].
\]
\item[(ii)] Suppose that $h_X(x)$ is increasing and $r_X(x)$ is decreasing. If, in addition,
\[
\GMD[X]\le \dfrac{1-\Lambda}{2\Lambda},
\]
then
\[
\SD[X]\le \GMD[X].
\]
\end{itemize}
\end{theorem}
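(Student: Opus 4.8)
The plan is to mirror the two continuous arguments (the proofs of Theorems~\ref{thm:SDgeGMD} and~\ref{thm:SDleGMD}) using the discrete tools, namely Propositions~\ref{prop:orderNonNegativeDiscrete} and~\ref{prop:m_Y_RepresentationDiscrete} together with Chebyshev's sum inequality, and then to account for the two discrete corrections: the atom of $Y=|X-X'|$ at the origin, which carries mass $\mathbb{P}(Y=0)=\Lambda$, and the shifted threshold $\e[Y]+\tfrac12$ in Proposition~\ref{prop:orderNonNegativeDiscrete}. The first quantity I would record is the mean excess function at the origin. Setting $t=0$ in \eqref{eq:m_YRepresentationDiscrete} and using $C_X(x-1,0)=1$ (equivalently $D_X(x,0)=1$) gives $m_Y(0)=\e^{F}[h_X(X)^{-1}]=\e^{S}[r_X(X)^{-1}]$, while the elementary identity $\e[Y]=m_Y(0)\,\mathbb{P}(Y>0)=m_Y(0)(1-\Lambda)$ yields the key relation
\[
m_Y(0)=\frac{\GMD[X]}{1-\Lambda},
\qquad\text{so that}\qquad
m_Y(0)-\e[Y]=\Lambda\,m_Y(0).
\]
Consequently the discrete condition $m_Y(t)\ge(\le)\,\e[Y]+\tfrac12$, evaluated at $t=0$, is precisely the statement $\Lambda\,m_Y(0)\ge(\le)\,\tfrac12$, i.e.\ $\GMD[X]\ge(\le)\,\tfrac{1-\Lambda}{2\Lambda}$.

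For part~(ii) this makes the argument immediate. Assuming $h_X(x)$ increasing and $r_X(x)$ decreasing, the discrete form of Proposition~\ref{prop:equivalences} makes $C_X(x-1,t)$ increasing in $x$ while $h_X(x)^{-1}$ is decreasing, so the two factors in the first representation of \eqref{eq:m_YRepresentationDiscrete} are oppositely monotone. Chebyshev's sum inequality applied under the measure $Q^{F}$ then gives $m_Y(t)\le\e^{F}[h_X(X)^{-1}]=m_Y(0)=\tfrac{\GMD[X]}{1-\Lambda}$ for every $t\ge0$. The additional hypothesis $\GMD[X]\le\tfrac{1-\Lambda}{2\Lambda}$ is exactly $m_Y(0)\le\e[Y]+\tfrac12$, whence $m_Y(t)\le\e[Y]+\tfrac12$ for all $t\ge0$, and Proposition~\ref{prop:orderNonNegativeDiscrete} delivers $\SD[X]\le\GMD[X]$.

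For part~(i) the same mechanism produces the reversed bound but leaves a genuine gap. If $h_X(x)$ is decreasing then, by the discrete form of Proposition~\ref{prop:r_Xh_Ximplications}, $r_X(x)$ is decreasing as well, so $C_X(x-1,t)$ is increasing in $x$ and $h_X(x)^{-1}$ is increasing; the factors are now similarly ordered and Chebyshev gives $m_Y(t)\ge m_Y(0)$ for all $t\ge0$. If instead $r_X(x)$ is increasing I would argue symmetrically from the second representation in \eqref{eq:m_YRepresentationDiscrete}: then $h_X(x)$ is increasing, hence $D_X(x,t)$ is decreasing in $x$ and $r_X(x)^{-1}$ is decreasing, again similarly ordered, so $m_Y(t)\ge\e^{S}[r_X(X)^{-1}]=m_Y(0)$ (this case is also the reflection of the first, and reflection leaves $\GMD$ and $\Lambda$ invariant). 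To invoke Proposition~\ref{prop:orderNonNegativeDiscrete} I now need $m_Y(0)\ge\e[Y]+\tfrac12$, i.e.\ the auxiliary inequality
\[
\Lambda\,m_Y(0)\ge\tfrac12
\qquad\Longleftrightarrow\qquad
\Lambda\bigl(1+2\,\GMD[X]\bigr)\ge1 ,
\]
which is not assumed and must therefore be extracted from the heavy-tail hypothesis itself.

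Establishing this inequality is the main obstacle, and it is precisely where the discrete theory departs from the continuous one (there $\Lambda=0$ and $m_Y(0)=\e[Y]$ make the condition automatic). The easy regime is $\Lambda\ge\tfrac12$: since $Y$ is integer valued, $m_Y(0)=\e[Y\mid Y\ge1]\ge1$, so $\Lambda\,m_Y(0)\ge\tfrac12$, and because a decreasing $h_X$ (or increasing $r_X$) forces unbounded support by Proposition~\ref{prop:r_Xh_Ximplications}, the bound is strict, giving $\SD[X]>\GMD[X]$. The substantive case is the spread-out regime $\Lambda<\tfrac12$, where one must show that the heavy tail makes $m_Y(0)$ correspondingly large. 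I would prove this from the log-convexity of $S_X$ supplied by Proposition~\ref{prop:equivalences}, benchmarking against the constant-hazard (geometric) law, for which a direct computation gives $\Lambda\,m_Y(0)=\tfrac{1}{2-p}>\tfrac12$, and then showing that replacing a constant hazard by a decreasing one only increases $\Lambda\,m_Y(0)$. Concretely, writing $g_k=\mathbb{P}(|X-X'|=k)$, the target $\Lambda(1+2\,\GMD[X])\ge1$ rearranges to $\sum_{k\ge1}(2\Lambda k-1)\,g_k\ge0$, in which the negative small-$k$ contributions must be dominated by the slowly decaying large-$k$ tail; making this domination rigorous under log-convexity of $S_X$, for instance by summation by parts against the monotone ratios $S_X(x)/S_X(x-1)$, is the crux, after which Proposition~\ref{prop:orderNonNegativeDiscrete} closes part~(i) with the stated strict inequality.
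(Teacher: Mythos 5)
Your reduction of both parts to the single inequality $\Lambda\,m_Y(0)\ge\tfrac12$ (respectively $\le\tfrac12$) is exactly right, and your part~(ii) coincides with the paper's argument: Chebyshev under $Q^{F}$ gives $m_Y(t)\le m_Y(0)$ for all $t\ge0$, and the hypothesis $\GMD[X]\le\frac{1-\Lambda}{2\Lambda}$ is precisely $m_Y(0)\le\e[Y]+\tfrac12$, so Proposition~\ref{prop:orderNonNegativeDiscrete} applies. Part~(i), however, is not complete. You correctly obtain $m_Y(t)\ge m_Y(0)$ via Chebyshev and correctly identify that what remains is $\Lambda\,m_Y(0)>\tfrac12$, but you only establish this in the regime $\Lambda\ge\tfrac12$ (where $m_Y(0)\ge1$ suffices); for $\Lambda<\tfrac12$ you offer a programme -- benchmarking against the constant-hazard law and a summation-by-parts argument against the ratios $S_X(x)/S_X(x-1)$ -- without carrying it out, and you flag it yourself as the crux. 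That is a genuine gap, and the case split is in fact unnecessary.

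The paper closes this with one further application of Chebyshev's sum inequality rather than a tail-domination argument. Writing $F_X(x-1)=C_X(x,1)F_X(x)$ inside $m_Y(0)=\e^{F}[h_X(X)^{-1}]$ and using that both $h_X(x)^{-1}$ and $C_X(x,1)$ are increasing (the latter because $r_X(x)$ is decreasing, by Propositions~\ref{prop:r_Xh_Ximplications} and~\ref{prop:equivalences}), Chebyshev yields
\[
m_Y(0)\;\ge\;\frac{1}{\e[F_X(X)]}\sum_x h_X(x)^{-1}F_X(x)f_X(x)
\;=\;1+\frac{\e[F_X(X-1)]}{\e[F_X(X)]}\,m_Y(0),
\]
where the last equality splits $F_X(x)=f_X(x)+F_X(x-1)$ and uses $h_X(x)^{-1}f_X(x)=S_X(x-1)$ together with the identity $\e[S_X(X-1)]=\e[F_X(X)]=\tfrac12(1+\Lambda)$. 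Rearranging gives $m_Y(0)\ge\e[F_X(X)]/\e[f_X(X)]=\frac{1+\Lambda}{2\Lambda}>\frac{1}{2\Lambda}$, which is the bound you need, uniformly in $\Lambda$ and already strict (so the strict ordering $\SD[X]>\GMD[X]$ follows without appealing to unbounded support). If you wish to salvage your own route you would need to make the log-convexity step rigorous for all $\Lambda<\tfrac12$; the self-referential Chebyshev bound above is shorter and avoids the case analysis entirely.
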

The complete proof of this theorem is deferred to Appendix~A\ref{app:SDGMDDiscrete}.
\newline

A fundamental asymmetry inherent in the discrete setting is shown in Theorem~\ref{thm:SDGMDDiscrete}: discrete distributions naturally favor the dominance of the SD measure over the GMD. This asymmetry is manifested in the fact that either a decreasing hazard rate $h_X(x)$ or an increasing reversed hazard rate $r_X(x)$ is sufficient to ensure the strict ordering
\[
\SD[X]>\GMD[X].
\]
By contrast, the reverse ordering is not guaranteed by the monotonic behavior of $h_X(x)$ and $r_X(x)$ alone. Instead, it requires the additional constraint
\[
\GMD[X]\le \dfrac{1-\Lambda}{2\Lambda},
\]
which imposes an explicit upper bound on the GMD in terms of the concentration index $\Lambda=\mathbb{P}(X=X')$. This ratio compares the probability that two independent draws from the distribution differ to the probability that they coincide, thereby quantifying the odds against a tie. Consequently, GMD dominance requires the dispersion measure to be sufficiently concentrated, in the sense that it must not exceed one half of this odds ratio.

To illustrate the scope and effectiveness of Theorem~\ref{thm:SDGMDDiscrete}, we now present several well-known discrete distributions that are frequently used in modeling counts or frequencies. Before doing so, we note that the reflection argument of Remark~\ref{rmk:h_Xr_X_Mirror}, as well as the sufficiency of log-concavity of $f_X(x)$ for ensuring an increasing $h_X(x)$ and a decreasing $r_X(x)$ established in Proposition~\ref{prop:logConcave_f_X}, remain valid in the discrete setting.

\begin{example}
\label{ex:SDGMDDiscrete}
We present several common discrete distributions satisfying the conditions of Theorem~\ref{thm:SDGMDDiscrete}. The first two examples illustrate Case~(i), in which SD dominance holds, while the last two correspond to Case~(ii), where  GMD dominance is obtained.
~
\begin{itemize}
\item[(1)] Suppose that $X$ follows a geometric distribution with parameter $p\in(0,1)$. Then
\[
h_X(x)=\dfrac{p(1-p)^x}{(1-p)^x}=p,\qquad x\ge0.
\]
The hazard rate is therefore decreasing (indeed, constant). Direct computation of the SD and the GMD yields
\[
\SD[X]=\sqrt{\frac{1-p}{p^2}}
\;>\;
\GMD[X]=\frac{2(1-p)}{p(2-p)},
\]
which confirms the ordering predicted by Theorem~\ref{thm:SDGMDDiscrete}.

\item[(2)] Let $X$ follow a discrete Pareto-type distribution, also known as the Zipf distribution, with probability mass function
\[
f_X(x)=\dfrac{x^{-\alpha-1}}{\zeta(\alpha+1)},\qquad x\ge1,\ \alpha>2,
\]
where $\zeta(\cdot)$ denotes the Riemann zeta function. The corresponding hazard rate is decreasing for all $\alpha>2$. For concreteness, consider $\alpha=3$, for which the monotonic behavior of $h_X(x)$ is illustrated in Figure~\ref{fig:SDgeGMDDiscrete1}.
\begin{figure}[H]
\centering
\includegraphics[scale=0.75]{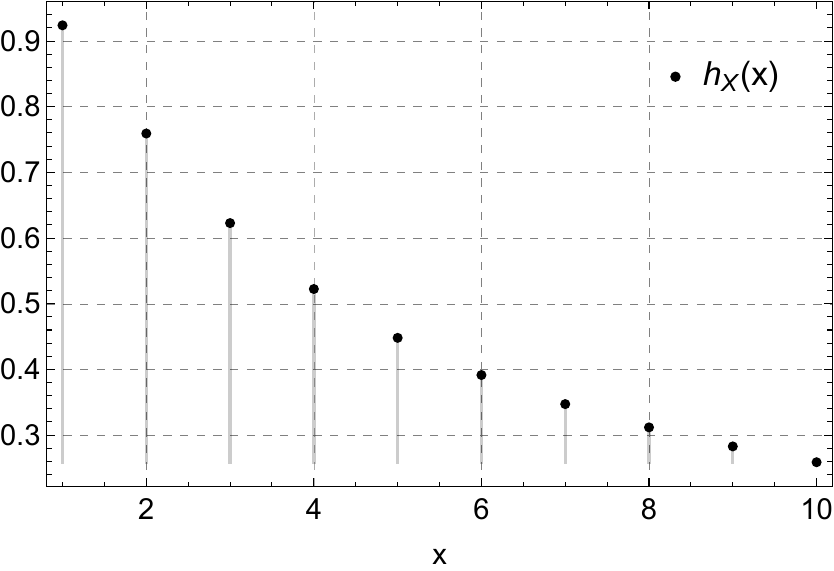}
\caption{Plot of the hazard rate $h_X(x)$ as a function of $x$.}
\label{fig:SDgeGMDDiscrete1}
\end{figure}
Computing the dispersion measures yields
\[
\SD[X]=0.54>\GMD[X]=0.21,
\]
again in agreement with Theorem~\ref{thm:SDGMDDiscrete}.

\item[(3)] Assume that $X$ follows a Poisson distribution with mean $\theta>0$. To ensure that $h_X(x)$ is increasing and $r_X(x)$ is decreasing, we verify that the Poisson PMF is log-concave. Indeed,
\[
\dfrac{f_X(x)^2}{f_X(x-1)f_X(x+1)}
=
\dfrac{\dfrac{\theta^{2x}\exp(-2\theta)}{(x!)^2}}
{\dfrac{\theta^{x-1}\exp(-\theta)}{(x-1)!}\,
 \dfrac{\theta^{x+1}\exp(-\theta)}{(x+1)!}}
=
\dfrac{(x-1)!\,(x+1)!}{(x!)^2}
=
\dfrac{x+1}{x}
>1,
\]
for all $x\ge0$. Hence $f_X(x)^2>f_X(x-1)f_X(x+1)$, and the PMF is log-concave. This establishes the first requirement for GMD dominance. Figures~\ref{fig:SDleGMDDiscrete12} and~\ref{fig:SDleGMDDiscrete11} display, respectively, the differences $\GMD[X]-(1-\Lambda)/(2\Lambda)$ and $\SD[X]-\GMD[X]$ as functions of $\theta$.
\begin{figure}[H]
\centering
\includegraphics[scale=0.75]{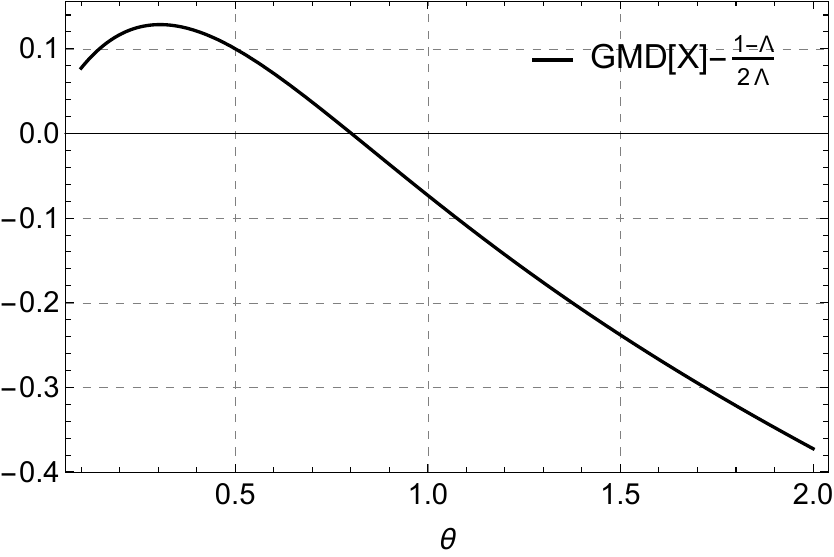}
\caption{Plot of $\GMD[X]-\dfrac{1-\Lambda}{2\Lambda}$ as a function of $\theta$.}
\label{fig:SDleGMDDiscrete12}
\end{figure}
\begin{figure}[H]
\centering
\includegraphics[scale=0.75]{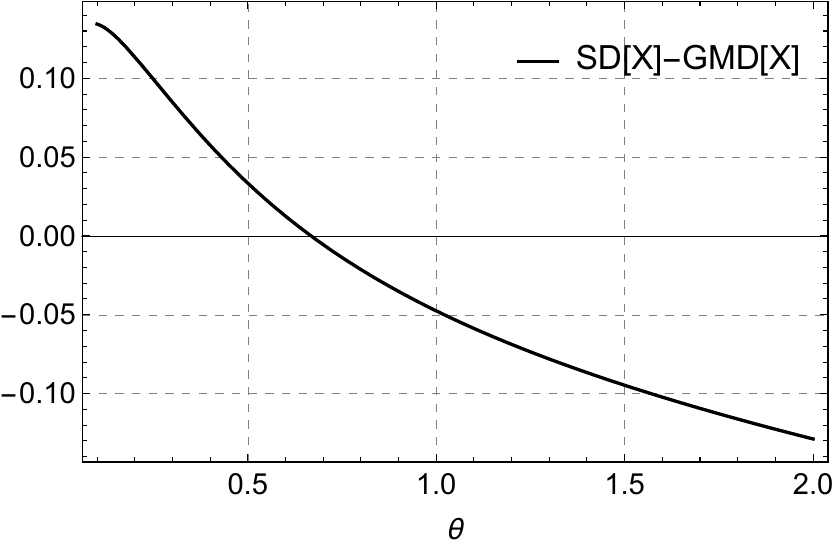}
\caption{Plot of $\SD[X]-\GMD[X]$ as a function of $\theta$.}
\label{fig:SDleGMDDiscrete11}
\end{figure}
Figure~\ref{fig:SDleGMDDiscrete12} shows that the concentration bound is satisfied for $\theta\ge0.8$, which is corroborated by GMD dominance observed over the same range in Figure~\ref{fig:SDleGMDDiscrete11}.

\item[(4)] Let $X$ be a negative binomial random variable with parameters $r>0$ and $p\in(0,1)$. We first determine the parameter values for which the PMF is log-concave. A direct calculation gives
\begin{align*}
\dfrac{f_X(x)^2}{f_X(x-1)f_X(x+1)}
&=
\dfrac{\displaystyle\binom{x+r-1}{x}^2(1-p)^{2x}p^{2r}}
{\displaystyle\binom{x+r-2}{x-1}(1-p)^{x-1}p^r\,
\binom{x+r}{x+1}(1-p)^{x+1}p^r}
\\
&=
\dfrac{\displaystyle\binom{x+r-1}{x}^2}
{\displaystyle\binom{x+r-2}{x-1}\binom{x+r}{x+1}}
=
\dfrac{(x+r-1)(x+1)}{x(x+r)}.
\end{align*}
Elementary algebra shows that $(x+r-1)(x+1)\ge x(x+r)$ if and only if $r\ge1$. Thus, the PMF is log-concave whenever $r\ge1$. Fixing $r=2$ and computing the relevant quantities as functions of $p$ yields
\[
\SD[X]=\sqrt{\frac{2(1-p)}{p^2}},\qquad
\GMD[X]=\frac{4(1-p)\bigl(3-(3-p)p\bigr)}{p(2-p)^3},
\]
and
\[
\dfrac{1-\Lambda}{2\Lambda}
=
\frac{2}{p}-\frac{1}{2-(2-p)p}-1.
\]
A direct comparison shows that $\GMD[X]\le(1-\Lambda)/(2\Lambda)$ for $p\in(0,0.57]$, while $\SD[X]\le\GMD[X]$ for $p\in(0,0.65]$. Together with the log-concavity of $f_X(x)$ at $r=2$, this confirms the ordering implication of Theorem~\ref{thm:SDGMDDiscrete}.
\end{itemize}
\end{example}

Beyond constructing examples by directly examining the monotonicity of $h_X(x)$ and $r_X(x)$, as in Example~\ref{ex:SDGMDDiscrete}, additional models can be generated through closure properties. As in the continuous setting of Proposition~\ref{prop:h_xr_X_mixtures}, the class of distributions with decreasing $h_X(x)$ (or increasing $r_X(x)$) is closed under mixing. For instance, since the geometric distribution exhibits SD dominance and the negative binomial distribution with $r\le1$ can be expressed as a mixture of geometric distributions (see \cite{Korolev2019}), it follows that the negative binomial distribution with $r\le1$ also exhibits SD dominance. Similarly, the class of log-concave PMFs is closed under convolution, as in the continuous case of Proposition~\ref{prop:f_Xconvolutions}. Consequently, any convolution of log-concave PMFs -- such as those arising from Poisson, binomial, or negative binomial (with $r\ge1$) distributions  --remains log-concave. In such cases, verifying GMD dominance reduces to checking whether the concentration ratio bounds the GMD measure.

In summary, while the discrete extension parallels the continuous case in several respects, it demands additional care due to the intrinsic structure of countable atoms rather than a continuum. We conclude this section by noting that it remains an open and interesting problem to characterize the SD--GMD ordering through alternative sufficient conditions that are better adapted to discrete distributions, as well as to identify potentially weaker conditions in the continuous setting. More generally, in both the discrete and continuous frameworks, an intriguing direction for future research is to investigate how the ordering between SD and GMD is affected when dependence between identically distributed random variables $X$ and $X'$ is allowed.

\section{Conclusions}
\label{sec:conclusions}
We have developed a comprehensive framework for comparing the standard deviation and the Gini mean difference, two fundamental yet conceptually distinct measures of dispersion. By reformulating both quantities through pairwise deviations and analyzing the resulting absolute difference, we showed that the SD--GMD ordering is governed by the behavior of the mean excess function and, ultimately, by structural features of the underlying distribution.

Our main findings establish that SD dominance is intrinsically linked to heavy-tailed behavior. Specifically, decreasing hazard rates or increasing reverse hazard rates ensure that extreme realizations receive sufficient weight to inflate the quadratic nature of the SD relative to the linear structure of the GMD. In contrast, when both tails are simultaneously light -- formalized through increasing hazard rates and decreasing reverse hazard rates -- the GMD dominates the SD. These results reveal that neither measure universally dominates the other; instead, their ordering reflects a precise balance between tail risk and central concentration.

Beyond these core dominance theorems, we showed that the ordering is stable under a wide range of operations. In particular, SD dominance is preserved under mixtures and tail truncation in heavy-tailed regimes, while GMD dominance is preserved under convolution and sufficiently deep tail conditioning when log-concavity holds. Extensions to discrete distributions further demonstrate the robustness of the approach and highlight the subtle differences between continuous and discrete settings.

Taken together, the results clarify the distributional forces that determine dispersion ordering and provide a principled basis for selecting variability measures in statistical and risk applications. The framework developed here opens the door to further extensions, including multivariate settings, dependence structures, and alternative notions of dispersion, which may deepen our understanding of variability and risk beyond the classical variance paradigm.
\newpage
\printbibliography
\newpage
\begin{appendices}
\renewcommand{\thesection}{\Alph{section}}
\section{Proofs}
\renewcommand{\thesubsection}{\roman{subsection}}
\subsection{Proof of Proposition \ref{prop:equivalences}}
\label{app:equivalences}
\begin{proof}
We prove the equivalence of conditions \emph{(A1)--(A3)}. The equivalence of \emph{(B1)--(B3)} follows by  analogous arguments.
\newline
(A1)$\implies$(A2): Assume $h_X(x)$ is increasing (decreasing), then, for each $t\ge0$, the function:
\[
D_X(x,t)=\dfrac{S_X(x+t)}{S_X(x)}=\exp\left(-\int_{x}^{x+t}h_X(w)\,\d w\right)=\exp\left(-\int_{0}^{t}h_X(u+x)\,\d u\right)
\]
is decreasing (increasing) in $x$.
\newline
(A2)$\implies$(A3): Suppose that for each $t\ge0$, $D_X(x,t)$ is decreasing (increasing) in $x$. Take $x_1<x_2$ then $D_X(x_1,t)\ge(\le)\,D_X(x_2,t)$, which implies
\[
\dfrac{\log S_X(x_1+t)-\log S_X(x_1)}{t}\ge(\le)\, \dfrac{\log S_X(x_2+t)-\log S_X(x_2)}{t},
\]
i.e. the secant slopes are decreasing (increasing). This is equivalent to $S_X(x)$ being log-concave (log-convex).
\newline
(A3)$\implies$(A1): Suppose that $S_X(x)$ is a log-concave (log-convex) DDF. Then $\log S_X(x)$ is concave (convex), which implies that its derivative is decreasing (increasing). Since
\[
(\log S_X(x))^{'}=-h_X(x),
\]
it follows that $h_X(x)$ is increasing (decreasing). This establishes the implication and completes the proof.
\end{proof}
\subsection{Proof of Proposition \ref{prop:OrderNonNegative}}
\label{app:OrderNonNegative}
\begin{proof}
We begin by rewriting the quantity $\dfrac{1}{2}\,\e[Y^2]$ as follows:
\begin{align*}
\dfrac{1}{2}\,\e[Y^2]
&=\dfrac{1}{2}\,\int_0^\infty y^2\, \d F_Y(y)
\\
&=\int_0^\infty\left(\int_0^y w\,\d w\right)\d F_Y(y)
\\
&=\int_0^\infty\left(\int_w^\infty \d F_Y(y)\right)w\,\d w
\\
&=\int_0^\infty wS_Y(w)\,\d w
\\
&=\int_0^\infty\left(\int_0^w\d z\right)S_Y(w)\,\d w
\\
&=\int_0^\infty\left(\int_z^\infty S_Y(w)\,\d w\right)\d z
\\
&=\int_0^\infty m_Y(z)\,S_Y(z)\,\d z.
\end{align*}
In the final step we used the standard representation of the mean excess function,
\[
m_Y(z)=\dfrac{\int_z^\infty S_Y(w)\,\d w}{S_Y(z)}.
\]

Since $m_Y(z)\ge(\le)\,m_Y(0)$ for all $z\ge0$, it follows that
\begin{align*}
\dfrac{1}{2}\,\e[Y^2]
&=\int_0^\infty m_Y(z)\,S_Y(z)\,\d z
\\
&\ge(\le)\,m_Y(0)\int_0^\infty S_Y(z)\,\d z
\\
&=m_Y(0)^2
\\
&=\e[Y]^2.
\end{align*}
Equivalently,
\[
\dfrac{1}{2}\,\e[Y^2]\ge(\le)\,\e[Y]^2.
\]
Taking square roots on both sides yields the desired result and completes the proof.
\end{proof}
\subsection{Proof of Proposition \ref{prop:m_Y_Representation}}
\label{app:m_Y_Representation}
\begin{proof}
We derive both representations by expressing the  DDF $S_Y(y)$ in terms of $F_X(x)$ and $S_X(x)$ and substituting it back into the standard representation of $m_Y(t)$.

Since $Y=|X-X'|$ with $X$ and $X'$ i.i.d., we have
\[
S_Y(y)=\mathbb{P}(|X-X'|>y)=2\,\mathbb{P}(X-X'>y).
\]
Consequently,
\[
S_Y(y)
=2\int_{-\infty}^{\infty}S_X(x+y)\,\d F_X(x)
=2\int_{-\infty}^{\infty}F_X(x-y)\,\d F_X(x).
\]
Substituting the first expression of $S_Y(y)$ into the integral defining $m_Y(t)$ yields
\begin{align*}
m_Y(t)S_Y(t)
&=\int_t^\infty S_Y(y)\,\d y
\\
&=2\int_t^\infty\left(\int_{-\infty}^{\infty}S_X(x+y)\,\d F_X(x)\right)\d y
\\
&=2\int_{-\infty}^{\infty}\left(\int_{x+t}^\infty S_X(w)\,\d w\right)\d F_X(x)
\\
&=2\int_{-\infty}^{\infty}\left(\int_{-\infty}^{w-t}\d F_X(x)\right)S_X(w)\,\d w
\\
&=2\int_{-\infty}^{\infty}F_X(w-t)S_X(w)\,\d w
\\
&=2\int_{-\infty}^{\infty}
\dfrac{F_X(w-t)}{F_X(w)}
\dfrac{S_X(w)}{f_X(w)}
F_X(w)f_X(w)\,\d w
\\
&=2\int_{-\infty}^{\infty}
C_X(w,t)\,h_X(w)^{-1}
F_X(w)f_X(w)\,\d w
\\
&=\e^{F}[C_X(X,t)\,h_X(X)^{-1}],
\end{align*}
where the expectation is taken with respect to the measure
\[
\d Q^F(x)=\dfrac{F_X(x)}{\e[F_X(X)]}\d F_X(x).
\]

Using the second expression of $S_Y(y)$ for the left-hand side gives
\[
S_Y(t)
=2\int_{-\infty}^{\infty}F_X(x-t)\,\d F_X(x)
=2\,\e[F_X(X-t)]
=\e^{F}[C_X(X,t)].
\]
Combining these quantities yields
\[
m_Y(t)=\dfrac{\e^{F}[C_X(X,t)\,h_X(X)^{-1}]}{\e^{F}[C_X(X,t)]}.
\]

The second representation follows analogously by interchanging the roles of the two expressions for $S_Y(y)$ in the preceding steps. This completes the proof.
\end{proof}
\subsection{Proof of Proposition \ref{prop:r_Xh_Ximplications}}
\label{app:r_Xh_Ximplications}
\begin{proof}
We start by expressing each hazard in terms of the other:
\[
r_X(x) = h_X(x)\,\frac{S_X(x)}{F_X(x)} \quad \text{and} \quad
h_X(x) = r_X(x)\,\frac{F_X(x)}{S_X(x)}.
\]
Observe that the ratio $S_X(x)/F_X(x)$ is always decreasing, while $F_X(x)/S_X(x)$ is always increasing.  

Consequently:
\begin{itemize}
    \item[(i)] If $h_X(x)$ is decreasing, then $r_X(x) = h_X(x)\,S_X(x)/F_X(x)$ is the product of a decreasing functions which implies that $r_X(x)$ is decreasing.
    \item[(ii)] Analogously, if $r_X(x)$ is increasing, then $h_X(x) = r_X(x)\,F_X(x)/S_X(x)$ is the product of increasing functions, so $h_X(x)$ is increasing.
\end{itemize}
The boundedness statements in both parts follow from the equivalence between a decreasing $h_X(x)$ (respectively, an increasing $r_X(x)$) and the log-convexity of the DDF $S_X(x)$ (respectively, the CDF $F_X(x)$), as established in Proposition~\ref{prop:equivalences}. The unboundedness assertions follow from the integral representations of $S_X(x)$ and $F_X(x)$ in terms of $h_X(x)$ and $r_X(x)$, respectively.

We prove claim (i); the proof of claim (ii) proceeds analogously. First, we will show that log-convexity of $S_X(x)$ enforces a lower bound on the support of $X$. Suppose, for the sake of contradiction, that $S_X(x)$ is log-convex and that $X$ is unbounded below. Fix $x_1<x_2$ such that $S_X(x_1)>S_X(x_2)>0$. By the log-convexity of $S_X(x)$, for any $t\ge 0$,
\[
\frac{\log S_X(x_2)-\log S_X(x_1)}{x_2-x_1}
\ge
\frac{\log S_X(x_2)-\log S_X(x_1-t)}{x_2-x_1+t}.
\]
Rearranging yields
\[
\log S_X(x_1-t)
\ge
\log S_X(x_2)
+
\left(1+\frac{t}{x_2-x_1}\right)
\big(\log S_X(x_1)-\log S_X(x_2)\big).
\]
Since $\log S_X(x_1)-\log S_X(x_2)>0$, letting $t\to\infty$ causes the right-hand side to diverge to $+\infty$. However, the left-hand side is non-positive and converges to $0$, a contradiction. Hence, $X$ must be bounded below.

We now establish the unboundedness of the support above. Suppose that $h_X(x)$ is decreasing and, toward a contradiction, assume that $X$ has a finite upper end point $b\in\RR$. Recall that any DDF admits the representation
\[
S_X(x)=\exp\left(-\int_{a}^x h_X(t)\,\d t\right),
\]
where $a$ denotes the lower end point of the support of $X$. Since $S_X(x)\to 0$ as $x\to b$, it follows that
\[
\lim_{x\to b}\int_{a}^x h_X(t)\,\d t=\infty.
\]
However, because $h_X(x)$ is decreasing, this divergence cannot occur unless $h_X(x)$ is unbounded near $b$, which is impossible except in the degenerate case. This contradiction shows that $X$ cannot be bounded above.

The proof of claim (i) is now complete.
\end{proof}
\subsection{Proof of Proposition \ref{prop:h_xr_X_mixtures}}
\label{app:h_xr_X_mixtures}
\begin{proof}
We prove the case of decreasing hazard rates; the argument for increasing reverse hazard rates follows analogously.

By Proposition~\ref{prop:equivalences}, a decreasing hazard rate is equivalent to the corresponding DDF being log-convex. Since each $h_{X_{\theta}}(x)$ is decreasing, it follows that each $S_{X_{\theta}}(x)$ is log-convex. Therefore, it suffices to show that the mixture DDF
\[
S_X(x)=\int S_{X_{\theta}}(x)\,\d F_{\Theta}(\theta)
\]
is also log-convex. That is, for any $x_1,x_2$ and $\lambda\in[0,1]$, we must verify that
\[
S_X(\lambda x_1+(1-\lambda)x_2)\le S_X(x_1)^{\lambda}S_X(x_2)^{1-\lambda}.
\]
Using the representation of $S_X(x)$ and the log-convexity of each $S_{X_{\theta}}(x)$, we obtain
\begin{align*}
S_X(\lambda x_1+(1-\lambda)x_2)
&=\int S_{X_{\theta}}(\lambda x_1+(1-\lambda)x_2)\,\d F_{\Theta}(\theta)
\\
&\le \int S_{X_{\theta}}(x_1)^{\lambda}S_{X_{\theta}}(x_2)^{1-\lambda}\,\d F_{\Theta}(\theta).
\end{align*}

Applying H\"older's inequality (see, for example, \cite{Hardy1988}) to the functions
$g(\theta)=S_{X_{\theta}}(x_1)^{\lambda}$ and
$k(\theta)=S_{X_{\theta}}(x_2)^{1-\lambda}$,
with conjugate exponents $p=1/\lambda$ and $q=1/(1-\lambda)$, yields
\begin{align*}
S_X(\lambda x_1+(1-\lambda)x_2)&\le \int S_{X_{\theta}}(x_1)^{\lambda}S_{X_{\theta}}(x_2)^{1-\lambda}\,\d F_{\Theta}(\theta),
\\
&\le \left(\int S_{X_{\theta}}(x_1)\,\d F_{\Theta}(\theta)\right)^{\lambda}
     \left(\int S_{X_{\theta}}(x_2)\,\d F_{\Theta}(\theta)\right)^{1-\lambda}
\\
&= S_X(x_1)^{\lambda}S_X(x_2)^{1-\lambda}.
\end{align*}
i.e.
\[
S_X(\lambda x_1+(1-\lambda)x_2)
\le S_X(x_1)^{\lambda}S_X(x_2)^{1-\lambda}.
\]
Thus, $S_X(x)$ is log-convex, and the result follows.
\end{proof}
\subsection{Proof of Proposition \ref{prop:logConcave_f_X}}
\label{app:logConcave_f_X}
\begin{proof}
Suppose that the density function \(f_X(x)\) is log-concave. Define the functions
\[
g(x,t) = f_X(t)\,\mathbf{1}_{t \le x}
\quad \text{and} \quad
k(x,t) = f_X(t)\,\mathbf{1}_{t \ge x},
\]
where \(\mathbf{1}\) denotes the indicator function. We first show that both \(g(x,t)\) and \(k(x,t)\) are jointly log-concave.

Let \(x_1,x_2,t_1,t_2 \in \mathbb{R}\) and \(\lambda \in [0,1]\). Then
\begin{align*}
g(\lambda(x_1,t_1) + (1-\lambda)(x_2,t_2))
&= f_X(\lambda t_1 + (1-\lambda)t_2)\,
\mathbf{1}_{\lambda t_1 + (1-\lambda)t_2 \le \lambda x_1 + (1-\lambda)x_2}
\\
&\ge f_X(t_1)^{\lambda} f_X(t_2)^{1-\lambda}\,
\mathbf{1}_{\lambda t_1 + (1-\lambda)t_2 \le \lambda x_1 + (1-\lambda)x_2}
\\
&\ge f_X(t_1)^{\lambda} f_X(t_2)^{1-\lambda}\,
\mathbf{1}_{\lambda t_1 \le \lambda x_1}\,
\mathbf{1}_{(1-\lambda)t_2 \le (1-\lambda)x_2}
\\
&= \left(f_X(t_1)\,\mathbf{1}_{t_1 \le x_1}\right)^{\lambda}
   \left(f_X(t_2)\,\mathbf{1}_{t_2 \le x_2}\right)^{1-\lambda}
\\
&= g(x_1,t_1)^{\lambda}\, g(x_2,t_2)^{1-\lambda}.
\end{align*}
Hence,
\[
g(\lambda(x_1,t_1) + (1-\lambda)(x_2,t_2))
\ge g(x_1,t_1)^{\lambda}\, g(x_2,t_2)^{1-\lambda},
\]
which shows that \(g(x,t)\) is jointly log-concave. An identical argument establishes the joint log-concavity of \(k(x,t)\).

Fix \(x_1,x_2 \in \mathbb{R}\) and let \(\lambda \in [0,1]\). Since \(g(x,t)\) is jointly log-concave, the Pr\'ekopa--Leindler inequality (see \cite{Prekopa1971}) yields
\begin{align*}
F_X(\lambda x_1 + (1-\lambda)x_2)
&= \int g(\lambda x_1 + (1-\lambda)x_2, t)\,\mathrm{d}t
\\
&\ge \left(\int g(x_1,t)\,\mathrm{d}t\right)^{\lambda}
     \left(\int g(x_2,t)\,\mathrm{d}t\right)^{1-\lambda}
\\
&= F_X(x_1)^{\lambda}\, F_X(x_2)^{1-\lambda}.
\end{align*}
Thus,
\[
F_X(\lambda x_1 + (1-\lambda)x_2)
\ge F_X(x_1)^{\lambda}\, F_X(x_2)^{1-\lambda},
\]
and hence \(F_X(x)\) is log-concave. The same argument applies to \(k(x,t)\), implying that \(S_X(x)\) is also log-concave.

Finally, by Proposition~\ref{prop:equivalences}, the log-concavity of both \(F_X(x)\) and \(S_X(x)\) is equivalent to the hazard rate \(h_X(x)\) being increasing and the reverse hazard rate \(r_X(x)\) being decreasing. By Theorem~\ref{thm:SDleGMD}, these properties imply the ordering
\[
\SD[X] \le \GMD[X].
\]
This completes the proof.
\end{proof}
\subsection{Proof of Proposition \ref{prop:f_Xconvolutions}}
\label{app:f_Xconvolutions}
\begin{proof}
The proof follows the same general strategy as that of Proposition \ref{prop:logConcave_f_X}.  
Define the auxiliary function
\[
g(s,x)=f_{X_1}(x)\,f_{X_2}(s-x), \qquad (s,x)\in\mathbb{R}^2.
\]
We first show that $g(s,x)$ is jointly log-concave. Since both $f_{X_1}(x)$ and $f_{X_2}(x)$ are log-concave, for any $s_1,s_2,x_1,x_2\in\mathbb{R}$ and $\lambda\in[0,1]$,
\begin{align*}
g\!\left(\lambda(s_1,x_1)+(1-\lambda)(s_2,x_2)\right)
&= f_{X_1}\!\left(\lambda x_1+(1-\lambda)x_2\right)
   f_{X_2}\!\left(\lambda(s_1-x_1)+(1-\lambda)(s_2-x_2)\right)
\\
&\ge f_{X_1}(x_1)^{\lambda} f_{X_1}(x_2)^{1-\lambda}
     f_{X_2}(s_1-x_1)^{\lambda} f_{X_2}(s_2-x_2)^{1-\lambda}
\\
&=\bigl(g(s_1,x_1)\bigr)^{\lambda}\,
   \bigl(g(s_2,x_2)\bigr)^{1-\lambda}.
\end{align*}
Hence, $g(s,x)$ is jointly log-concave on $\mathbb{R}^2$.

Next, fix $s_1,s_2\in\mathbb{R}$ and $\lambda\in[0,1]$. Since $f_{X_1+X_2}(s)=\int g(s,x)\,\mathrm{d}x$ and $g$ is jointly log-concave, the Pr\'ekopa--Leindler inequality yields
\begin{align*}
f_{X_1+X_2}\!\left(\lambda s_1+(1-\lambda)s_2\right)
&=\int g\!\left(\lambda s_1+(1-\lambda)s_2,x\right)\,\mathrm{d}x
\\
&\ge \left(\int g(s_1,x)\,\mathrm{d}x\right)^{\lambda}
     \left(\int g(s_2,x)\,\mathrm{d}x\right)^{1-\lambda}
\\
&= f_{X_1+X_2}(s_1)^{\lambda}\,
   f_{X_1+X_2}(s_2)^{1-\lambda}.
\end{align*}
This inequality shows that $f_{X_1+X_2}(s)$ is log-concave, completing the proof.
\end{proof}
\subsection{Proof of Proposition \ref{prop:tailSDgeGMD}}
\label{app:tailSDgeGMD}
\begin{proof}
The result relies on the fact that monotonicity properties of the hazard rate and reverse hazard rate functions are preserved under truncation. We prove part (i); part (ii) follows by similar arguments.

Assume that $h_{X_{u^*}^+}(x)$ is decreasing on $x\ge u^*$. Fix any $u\ge u^*$. For $x\ge u$, the hazard rate of $X_u^+$ satisfies
\begin{align*}
h_{X_u^+}(x)
&=
-\left(\log S_{X_u^+}(x)\right)'
\\
&=
-\left(\log \mathbb{P}(X_{u^*}^+>x \mid X_{u^*}^+>u)\right)'
\\
&=
-\left(\log \frac{S_{X_{u^*}^+}(x)}{S_{X_{u^*}^+}(u)}\right)'
\\
&=
\frac{f_{X_{u^*}^+}(x)}{S_{X_{u^*}^+}(x)}
=
h_{X_{u^*}^+}(x),
\qquad x\ge u\ge u^*.
\end{align*}
Therefore, $h_{X_u^+}(x)$ inherits the decreasing property of $h_{X_{u^*}^+}(x)$.

Applying Theorem \ref{thm:SDgeGMD} to the random variable $X_u^+$ yields the desired SD dominance over GMD. 
\end{proof}
\subsection{Proof of Proposition \ref{prop:SDleGMDtail}}
\label{app:SDleGMDtail}
\begin{proof}
We prove (i); the proof of (ii) follows analogously.

Suppose there exists $u^*\in\RR$ such that $f_{X_{u^*}^+}(x)$ is log-concave. For any $u\ge u^*$, the density of $X_u^+$ is given by
\[
f_{X_u^+}(x)
= -\left(S_{X_u^+}(x)\right)'
= -\left( \frac{S_{X_{u^*}^+}(x)}{S_{X_{u^*}^+}(u)} \right)'
= \frac{f_{X_{u^*}^+}(x)}{S_{X_{u^*}^+}(u)},
\qquad x\ge u\ge u^*.
\]
Since $f_{X_{u^*}^+}(x)$ is log-concave and the normalizing constant $S_{X_{u^*}^+}(u)$ does not depend on $x$, it follows that $f_{X_u^+}(x)$ is also log-concave for all $u\ge u^*$. The conclusion then follows directly from Proposition~\ref{prop:logConcave_f_X}, which ensures that the GMD dominates the SD under log-concavity of the density.
\end{proof}
\subsection{Proof of Proposition \ref{prop:orderNonNegativeDiscrete}}
\label{app:orderNonNegativeDiscrete}
\begin{proof}
We begin by expanding the second moment of $Y$:
\begin{align*}
\dfrac{1}{2}\,\e[Y^2]
&=\dfrac{1}{2}\sum_{y=1}^{\infty} y^2 f_Y(y) \\
&=\dfrac{1}{2}\sum_{y=1}^{\infty}\sum_{w=1}^{y} (2w-1)\, f_Y(y) \\
&=\dfrac{1}{2}\sum_{w=1}^{\infty} (2w-1)\, S_Y(w-1) \\
&=\sum_{w=1}^{\infty} w\, S_Y(w-1)
   -\dfrac{1}{2}\sum_{w=1}^{\infty} S_Y(w-1) \\
&=\sum_{w=1}^{\infty}\sum_{z=1}^{w} S_Y(w-1)
   -\dfrac{1}{2}\sum_{w=1}^{\infty} S_Y(w-1) \\
&=\sum_{z=1}^{\infty}\sum_{w=z}^{\infty} S_Y(w-1)
   -\dfrac{1}{2}\sum_{w=1}^{\infty} S_Y(w-1) \\
&=\sum_{z=1}^{\infty} m_Y(z-1)\, S_Y(z-1)
   -\dfrac{1}{2}\sum_{w=1}^{\infty} S_Y(w-1).
\end{align*}
In the final step, we used the discrete representation of the mean excess function,
\[
m_Y(t)\, S_Y(t)=\sum_{w=t+1}^{\infty} S_Y(w-1).
\]

Since $m_Y(0)\, S_Y(0)=\e[Y]$ and, by assumption,
\[
m_Y(t) \ge (\le)\, \e[Y] + \dfrac{1}{2}
\quad \text{for all } t \ge 0,
\]
it follows that
\begin{align*}
\dfrac{1}{2}\,\e[Y^2]
&=\sum_{z=1}^{\infty} m_Y(z-1)\, S_Y(z-1)
   -\dfrac{1}{2}\sum_{w=1}^{\infty} S_Y(w-1) \\
&\ge (\le)\left(\e[Y] + \dfrac{1}{2}\right)
   \sum_{z=1}^{\infty} S_Y(z-1)
   -\dfrac{1}{2}\sum_{w=1}^{\infty} S_Y(w-1) \\
&=\e[Y]\sum_{w=1}^{\infty} S_Y(w-1) \\
&=\e[Y]^2.
\end{align*}
Equivalently,
\[
\dfrac{1}{2}\,\e[Y^2] \ge (\le)\, \e[Y]^2.
\]
Taking square roots on both sides yields the desired inequality.
\end{proof}
\subsection{Proof of Proposition \ref{prop:m_Y_RepresentationDiscrete}}
\label{app:m_Y_RepresentationDiscrete}
\begin{proof}
The argument follows the same structure as the proof of Proposition~\ref{prop:m_Y_Representation}. Since
\[
S_Y(y)=2\,\mathbb{P}(X-X'>y),
\]
we may write
\[
S_Y(y)
=2\sum_{x=-\infty}^{\infty} S_X(x+y)\, f_X(x)
=2\sum_{x=-\infty}^{\infty} F_X(x-y-1)\, f_X(x).
\]
Using the first expression of $S_Y(y)$, we expand $m_Y(t)$ as follows:
\begin{align*}
m_Y(t)\, S_Y(t)
&=\sum_{y=t+1}^{\infty} S_Y(y-1) \\
&=2\sum_{y=t+1}^{\infty}\sum_{x=-\infty}^{\infty}
   S_X(x+y-1)\, f_X(x) \\
&=2\sum_{x=-\infty}^{\infty}\sum_{w=x+t+1}^{\infty}
   S_X(w-1)\, f_X(x) \\
&=2\sum_{w=-\infty}^{\infty}\sum_{x=-\infty}^{w-t-1}
   S_X(w-1)\, f_X(x) \\
&=2\sum_{w=-\infty}^{\infty}
   F_X(w-t-1)\, S_X(w-1) \\
&=2\sum_{w=-\infty}^{\infty}
   C_X(w-1,t)\, h_X(w)^{-1}\, F_X(w-1)\, f_X(w) \\
&=2\, \e[F_X(X-1)]\,
   \e^{F}\!\left[C_X(X-1,t)\, h_X(X)^{-1}\right].
\end{align*}
Similarly, the second expression for $S_Y(t)$ becomes
\[
S_Y(t)
=2\, \e[F_X(X-1)]\, \e^{F}[C_X(X-1,t)].
\]
Combining the two displays yields
\[
m_Y(t)
=\dfrac{\e^{F}[C_X(X-1,t)\, h_X(X)^{-1}]}
       {\e^{F}[C_X(X-1,t)]}.
\]
The second representation in \eqref{eq:m_YRepresentationDiscrete} follows by interchanging the roles of the two expressions for $S_Y(y)$. This completes the proof.
\end{proof}
\subsection{Proof of Theorem \ref{thm:SDGMDDiscrete}}
\label{app:SDGMDDiscrete}
\begin{proof}
We begin with part~(i). Recall from Proposition~\ref{prop:m_Y_RepresentationDiscrete}, specifically equation~\eqref{eq:m_YRepresentationDiscrete}, that for $Y=|X-X'|$ the mean excess function admits the representation
\[
m_Y(t)
=\dfrac{\e^{F}[C_X(X-1,t)\, h_X(X)^{-1}]}
       {\e^{F}[C_X(X-1,t)]}.
\]
If $h_X(x)$ is decreasing, then by Propositions~\ref{prop:r_Xh_Ximplications} and~\ref{prop:equivalences} both $C_X(x-1,t)$ and $h_X(x)^{-1}$ are increasing functions of $x$. An application of Chebyshev’s sum inequality therefore yields
\[
m_Y(t)
\ge \e^{F}[h_X(X)^{-1}]
= m_Y(0),\quad \text{for all } t\ge 0.
\]

At this point, we only have the bound $m_Y(0)=\e[Y]/S_Y(0)$, and must further compare it with $\e[Y]+\tfrac{1}{2}$. Observe that
\[
m_Y(0)>\e[Y]+\dfrac{1}{2}
\quad \iff \quad
m_Y(0)>\dfrac{1}{2(1-S_Y(0))}
=\dfrac{1}{2\Lambda}.
\]
To establish this inequality, we expand $m_Y(0)$ and apply Chebyshev’s sum inequality once more:
\begin{align*}
m_Y(0)
&=\e^{F}[h_X(X)^{-1}] \\
&=\dfrac{1}{\e[F_X(X-1)]}
   \sum_{x=-\infty}^{\infty}
   h_X(x)^{-1}\, F_X(x-1)\, f_X(x) \\
&=\dfrac{\e[F_X(X)]}{\e[F_X(X-1)]}
   \sum_{x=-\infty}^{\infty}
   h_X(x)^{-1}\, C_X(x,1)\,
   \dfrac{F_X(x)}{\e[F_X(X)]}\, f_X(x) \\
&\ge
\dfrac{\e[F_X(X)]}{\e[F_X(X-1)]}
\left(
\sum_{x=-\infty}^{\infty}
h_X(x)^{-1}\,
\dfrac{F_X(x)}{\e[F_X(X)]}\, f_X(x)
\right)
\left(
\sum_{x=-\infty}^{\infty}
C_X(x,1)\,
\dfrac{F_X(x)}{\e[F_X(X)]}\, f_X(x)
\right) \\
&=\dfrac{1}{\e[F_X(X)]}
   \sum_{x=-\infty}^{\infty}
   h_X(x)^{-1}\, F_X(x)\, f_X(x) \\
&=\dfrac{1}{\e[F_X(X)]}
   \sum_{x=-\infty}^{\infty}
   h_X(x)^{-1}\,(f_X(x)+F_X(x-1))\, f_X(x) \\
&=\dfrac{1}{\e[F_X(X)]}
   \left(
   \sum_{x=-\infty}^{\infty} S_X(x-1)\, f_X(x)
   +\sum_{x=-\infty}^{\infty}
   h_X(x)^{-1}\, F_X(x-1)\, f_X(x)
   \right) \\
&=\dfrac{1}{\e[F_X(X)]}
   \left(
   \e[S_X(X-1)]
   +\e[F_X(X-1)]\, m_Y(0)
   \right) \\
&=1+\dfrac{\e[F_X(X-1)]}{\e[F_X(X)]}\, m_Y(0),
\end{align*}
where the last equality follows from the identity 
\[
\e[S_X(X-1)]=\e[F_X(X)]=\mathbb{P}(X\ge X')
=\dfrac{1}{2}\bigl(1+\mathbb{P}(X=X')\bigr)
=\dfrac{1}{2}(1+\Lambda).
\]
Rearranging terms yields
\[
m_Y(0)\ge
\dfrac{\e[F_X(X)]}{\e[F_X(X)]-\e[F_X(X-1)]}
=\dfrac{\e[F_X(X)]}{\e[f_X(X)]}
=\dfrac{1+\Lambda}{2\Lambda}.
\]
Since $\Lambda>0$, it follows that
\[
m_Y(0)\ge \dfrac{1+\Lambda}{2\Lambda}
> \dfrac{1}{2\Lambda}.
\]
Combining these inequalities, we conclude that
\[
m_Y(t)\ge m_Y(0)>\e[Y]+\dfrac{1}{2}.
\]
An application of Proposition~\ref{prop:orderNonNegativeDiscrete} therefore implies $\SD[X]>\GMD[X]$. The argument for increasing $r_X(x)$ follows analogously, noting that in this case both $h_X(x)^{-1}$ and $C_X(x-1,t)$ are decreasing.

We now turn to part~(ii). Suppose that $h_X(x)$ is increasing and $r_X(x)$ is decreasing. By Chebyshev’s sum inequality and arguments parallel to those above, we obtain
\[
m_Y(t)\le m_Y(0),\quad \text{for all } t\ge 0.
\]
Thus, to invoke Proposition~\ref{prop:orderNonNegativeDiscrete}, it remains to verify that
\[
m_Y(0)\le \e[Y]+\dfrac{1}{2}
\quad \iff \quad
m_Y(0)\le \dfrac{1}{2\Lambda}.
\]
However, a further application of Chebyshev’s inequality only yields
\[
m_Y(0)\le \dfrac{1+\Lambda}{2\Lambda},
\]
which exceeds $1/(2\Lambda)$. Consequently, the monotonicity of $h_X(x)$ and $r_X(x)$ alone is insufficient to guarantee $\SD[X]\le\GMD[X]$.

On the other hand, if we additionally assume that
\[
\GMD[X]=\e[Y]\le \dfrac{1-\Lambda}{2\Lambda},
\]
then, using $S_Y(0)=1-\Lambda$, this condition is equivalent to
\[
m_Y(0)=\dfrac{\e[Y]}{S_Y(0)}\le \dfrac{1}{2\Lambda}.
\]
It follows that
\[
m_Y(t)\le m_Y(0)\le \e[Y]+\dfrac{1}{2},
\]
and Proposition~\ref{prop:orderNonNegativeDiscrete} implies $\SD[X]\le \GMD[X]$. This completes the proof.
\end{proof}
\end{appendices}
\end{document}